\documentclass[a4paper]{article}

\usepackage[utf8]{inputenc}
\usepackage{lmodern} 
\usepackage{tikz}
\usepackage[hyphens]{url}
\usepackage{hyperref}
\usepackage[hyphenbreaks]{breakurl}
\usepackage{xurl}
\hypersetup{breaklinks=true}
\usepackage{amsmath,amssymb,amsthm}
\usepackage{mathtools}
\usepackage[ruled,vlined,linesnumbered]{algorithm2e}
\usepackage{cleveref} 
\usepackage{pgfplots}
\usepackage{orcidlink}
\usepackage[shortlabels]{enumitem} 
\pgfplotsset{compat=1.16}
\crefname{algocf}{alg.}{algs.}
\Crefname{algocf}{Algorithm}{Algorithms}
\usepackage{booktabs} 

\DeclareMathOperator*{\leftext}{lext}
\DeclareMathOperator*{\rightext}{rext}
\DeclareMathOperator*{\leftextc}{left}
\DeclareMathOperator*{\rightextc}{right}

\newcommand{\No}[1]{}
\newcommand{\textefg}{indexable EFG}
\newcommand{\textefgs}{indexable EFGs}

\newcommand{\Textefgs}{Indexable EFGs}

\newtheorem{theorem}{Theorem}
\crefname{theorem}{theorem}{theorems}
\Crefname{theorem}{Theorem}{Theorems}

\newtheorem{lemma}{Lemma}
\crefname{lemma}{lemma}{lemmas}
\Crefname{lemma}{Lemma}{Lemmas}

\newtheorem{corollary}{Corollary}
\crefname{corollary}{corollary}{corollaries}
\Crefname{corollary}{Corollary}{Corollaries}

\newtheorem{definition}{Definition}
\crefname{definition}{definition}{definitions}
\Crefname{definition}{Definition}{Definitions}

\crefname{problem}{problem}{problems}
\Crefname{problem}{Problem}{Problems}

\newtheorem{observation}{Observation}
\crefname{observation}{observation}{observations}
\Crefname{observation}{Observation}{Observations}

\newtheorem{remark}{Remark}
\crefname{remark}{remark}{remarks}
\Crefname{remark}{Remark}{Remarks}

\crefname{assumption}{assumption}{assumptions}
\Crefname{assumption}{Assumption}{Assumptions}

\newcommand\sa{\mathsf{SA}\xspace}

\newcommand\mbwt{\mathsf{BWT}\xspace}

\newcommand\rmq{\textsf{RMQ}\xspace}

\bibliographystyle{plainurl}

\title{Finding Maximal Exact Matches in Graphs}

\date{
Department of Computer Science, University of Helsinki, Finland
}

\author{
Nicola Rizzo\,\orcidlink{0000-0002-2035-6309}\\\texttt{nicola.rizzo@helsinki.fi} \and Manuel C\'aceres\,\orcidlink{0000-0003-0235-6951}\\\texttt{manuel.caceres@helsinki.fi} \and Veli M\"akinen\,\orcidlink{0000-0003-4454-1493}\\\texttt{veli.makinen@helsinki.fi}
}

\begin{document}

\maketitle

\begin{abstract}
We study the problem of finding maximal exact matches (MEMs) between a query string $Q$ and a labeled graph $G$. MEMs are an important class of seeds, often used in seed-chain-extend type of practical alignment methods because of their strong connections to classical metrics. A principled way to speed up chaining is to limit the number of MEMs by considering only MEMs of length at least $\kappa$ ($\kappa$-MEMs). However, on arbitrary input graphs, the problem of finding MEMs cannot be solved in truly sub-quadratic time under SETH (Equi et al., ICALP 2019) even on acyclic graphs.

In this paper we show an $O(n\cdot L \cdot d^{L-1} + m + M_{\kappa,L})$-time algorithm finding all $\kappa$-MEMs between $Q$ and $G$ spanning exactly $L$ nodes in $G$, where $n$ is the total length of node labels, $d$ is the maximum degree of a node in $G$, $m = |Q|$, and $M_{\kappa,L}$ is the number of output MEMs. We use this algorithm to develop a $\kappa$-MEM finding solution on indexable Elastic Founder Graphs (Equi et al., Algorithmica 2022) running in time $O(nH^2 + m + M_\kappa)$, where $H$ is the maximum number of nodes in a block, and $M_\kappa$ is the total number of $\kappa$-MEMs.

Our results generalize to the analysis of multiple query strings (MEMs between $G$ and any of the strings). Additionally, we provide some preliminary experimental results showing that the number of graph MEMs is an order of magnitude smaller than the number of string MEMs of the corresponding concatenated collection. 
\\[1ex]\textbf{Keywords} sequence to graph alignment, bidirectional BWT, r-index, suffix tree, founder graph
\\[1ex]\textbf{Acknowledgements} We thank an anonymous reviewer of WABI 2023 for pointing out an anomaly that lead us to correct the choice of a parameter value in our experiments.
\\[1ex]\textbf{Funding} This project received funding from the European Union’s Horizon 2020 research and innovation programme under the Marie Skłodowska-Curie grant agreement No.\ 956229, and from the Academy of Finland grants No.\ 352821 and 328877.
\end{abstract}

\section{Introduction}

Sequence alignment has been studied since the 1970s~\cite{needleman1970general,wagner1974string} and is a fundamental problem of computational molecular biology.
Solving the classical problems of \emph{longest common subsequence} (LCS) and \emph{edit distance} (ED) between two strings takes quadratic time with simple dynamic programs, and it was recently proven that no strongly subquadratic-time algorithms exist conditioned on the Strong Exponential Time Hypothesis (SETH)~\cite{bringmann2015quadratic,DBLP:journals/siamcomp/BackursI18}.
To overcome this hardness, researchers have used heuristics such as \emph{co-linear chaining}~\cite{DBLP:conf/spire/Abouelhoda07}: by taking (short) matches between the input strings, known as \emph{anchors}, we can take an ordered subset of these anchors and \emph{chain} them together into an alignment. Furthermore, when using \emph{maximal exact matches} (MEMs) as anchors, different co-linear chaining formulations capture both LCS~\cite{MS20} and ED~\cite{JGT22} in near-linear time. MEMs are also used in popular seed-and-extend read aligners~\cite{li2013aligning,marccais2018mummer4}. In fact, practical tools limit the number of MEMs by considering only $\kappa$-MEMs (MEMs of length at least $\kappa$) \cite{OhlebuschGK10,essaMEM}.

Extending alignment between sequences to sequence-to-graph alignment is an emerging and central challenge of \emph{computational pangenomics}~\cite{Maretal16}, as labeled graphs are a popular representation of pangenomes used in recent bioinformatics tools~\cite{Ma2022.01.07.475257,GC23,rautiainen2020graphaligner,li2020design}. We assume that a labeled graph $G = (V,E,\ell)$ ($\ell \colon V \to \Sigma^+$) is the reference pangenome of interest. Unfortunately, even finding exact occurrences of a given pattern in $G$ does not admit strongly subquadratic-time solutions under SETH~\cite{equi2023complexity}, and furthermore, a graph cannot be indexed in polynomial time to answer strongly subquadratic-time pattern matching queries~\cite{EMT21}. To circumvent this difficulty, research efforts have concentrated on finding parameterized solutions to (exact) pattern matching in labeled graphs~\cite{caceres2022parameterized,cotumaccio2021indexing,DBLP:conf/dcc/Cotumaccio22,rizzo2022solving}. Moreover, the use of MEMs and co-linear chaining has also been  extended to graphs~\cite{Ma2022.01.07.475257,GC23,rautiainen2020graphaligner,li2020design}.

In this paper, we study the problem of efficiently finding MEMs between a query string $Q$ and a labeled graph $G$, where we extend the MEM definition to capture any maximal match between $Q$ and the string spelled by some path of $G$. More precisely, our contributions are as follows:

\begin{itemize}
    \item In \Cref{sect:nodeMEMs}, we adapt the MEM finding algorithm between two strings of Belazzougui et al.~\cite{BCKM20} to find all $\kappa$-node-MEMs between $Q$ and $G = (V,E,\ell)$ in $O(m + n + M_\kappa)$ time, where $m = \lvert Q \rvert$, $n = \sum_{v \in V} \lvert \ell(v) \rvert$ is the cumulative length of the node labels, and $M_\kappa$ is the number of $\kappa$-node-MEMs (of length at least $\kappa$ and between the node labels and $Q$).
    \item Next, in \Cref{sect:3nodeMEMs}, we extend the previous algorithm to find all $\kappa$-MEMs spanning exactly $L$ nodes of $G$ in time $O(m + n \cdot L \cdot d^{L-1} + M_{\kappa, L})$, where $d$ is the maximum degree of any node $v \in V$ and $M_{\kappa, L}$ are the $\kappa$-MEMs of interest. Note that MEMs spanning less than $L$ nodes can occur multiple times in paths spanning exactly $L$ nodes, and our contribution is to introduce an efficient technique to filter out these MEMs.
    \item In \Cref{sect:efgMEMs}, we obtain the following results:
    \begin{itemize}
        \item We study $\kappa$-MEMs in indexable Elastic Founder Graphs (EFGs)~\cite{Equietal22}, a subclass of labeled acyclic graphs admitting a poly-time indexing scheme for linear-time pattern matching. Given an \textefg~$G$ of height $H$ (the maximum number of nodes in a graph block), we develop a suffix-tree-based solution to find all $\kappa$-MEMs spanning more than $3$ nodes in $G$ in $O(nH^2 + m + M_{\kappa,4+})$ time, where $M_{\kappa,4+}$ are the number of output MEMs. 
        \item Combined with the above results for $L=1,2,3$, we can find $\kappa$-MEMs of an \textefg~$G$ in $O(nH^2 + m + M_\kappa)$ time.
        \item We note that the previous results easily generalize to the batched query setting: by substituting $Q$ with the concatenation of different query strings $Q_1$, \dots, $Q_t$ of total length $m$, we compute all $\kappa$-MEMs between any query string and the graph with the same stated running time.
    \end{itemize}
    
    \item Finally, in \Cref{sect:experiments} we provide some preliminary experimental results on finding MEMs from a collection of strains of covid19. We use the bidirectional r-index~\cite{ANS22} as the underlying machinery. On the one hand, we build the r-index of the concatenation of the strains and find all $m_{\kappa}$ $\kappa$-MEMs. On the other hand, we build an \textefg\ of the strains and find an upper bound on all $M_{\kappa}$ $\kappa$-MEMs in this case. Our results indicate that $M_{\kappa}$ is an order of magnitude smaller than $m_{\kappa}$, thus confirming that graph MEMs compactly represent all MEMs.
\end{itemize}

\section{Preliminaries}\label{sect:preliminaries}

\paragraph*{Strings.}
We denote integer intervals by $[x..y]$, $x$ and $y$ inclusive. Let $\Sigma = [1..\sigma]$ be an alphabet.
A \emph{string} $T[1..n]$ is a sequence of symbols from $\Sigma$, that is, $T\in \Sigma^n$ where 
$\Sigma^n$ denotes the set of strings of length $n$ over $\Sigma$.
The \emph{length} of a string $T$ is denoted $|T|$ and the \emph{empty string} $\varepsilon$ is the string of length $0$. In this paper, we assume that $\sigma$ is always smaller or equal to the length of the strings we are working with. The concatenation of strings $T_1$ and $T_2$ is denoted as $T_1 \cdot T_2$, or just $T_1 T_2$. We denote by $T[x..y]$ the \emph{substring} of $T$ made of the concatenation of its characters from the $x$-th to the $y$-th, both inclusive; if $x = y$ then we also use $T[x]$ and if $y<x$ then $T[x..y] = \varepsilon$. The \emph{reverse} of a string $T[1..n]$, denoted by $\overline{T}$, is the string $T$ read from right to left, that is, $\overline{T} = T[n]T[n-1]..T[1]$. A \emph{suffix} (\emph{prefix}) of string $T[1..n]$ is $T[x..n]$ ($T[1..y]$) for $1\leq x \leq n$ ($1 \leq y \leq n$) and we say it is \emph{proper} if $x > 1$ ($y < n$). We denote by $\Sigma^*$ the set of finite strings over $\Sigma$, and also $\Sigma^+ = \Sigma^*\setminus \{\varepsilon\}$. String $Q$ \emph{occurs} in $T$ if $Q = T[x..y]$ for some interval $[x..y]$; in this case, we say that $[x..y]$ is a match of $Q$ in $T$.
Moreover, we study matches between substrings of $Q$ and $T$: a \emph{maximal exact match} (MEM) between $Q$ and $T$ is a triplet $(x_1,x_2,\ell)$ such that $Q[x_1..x_1 + \ell - 1] = T[x_2..x_2 + \ell - 1]$ and the match cannot be extended to the left nor to the right, that is, $x_1 = 1$ or $x_2 = 1$ or $Q[x_1-1] \neq T[x_2-1]$ (\emph{left-maximality}) and $x_1 + \ell = \lvert Q \rvert$ or $x_2 + \ell = \lvert T \rvert$ or $Q[x_1 + \ell] \neq T[x_2 + \ell]$ (\emph{right-maximality}). In this case, we say that the substring $Q[x_1..x_1 + \ell - 1]$ is a \emph{MEM string} between $Q$ and $T$. 
The \emph{lexicographic order} of two strings $T_1$ and $T_2$ is naturally defined by the total order $\le$ of the alphabet: $T_1 < T_2$ if and only if $T_1 \neq T_2$ and $T_1$ is a prefix of $T_2$ or there exists $y \geq 0$ such that $T_1[1..y]=T_2[1..y]$ and $T_1[y+1] < T_2[y+1]$.
We avoid the prefix-case by adding an \emph{end marker} $\$\not\in \Sigma$ to the strings and considering $\$$ to be the lexicographically smaller than any character in $\Sigma$.

\paragraph*{Labeled graphs.}

Let $G=(V,E,\ell)$ be a labeled graph with $V$ being the set of nodes, $E$ being the set of edges, and $\ell: V \to \Sigma^+$  being a function giving a label to each node. A length-$k$ \emph{path} $P$ from $v_1$ to $v_k$ is a sequence of nodes $v_1, \ldots, v_k$ connected by edges, that is, $(v_1,v_2),(v_2,v_3),\ldots,(v_{k-1},v_k) \in E$. A node $u$ \emph{reaches} a node $v$ if there is a path from $u$ to $v$. The label $\ell(P) \coloneqq \ell(v_1) \cdots \ell(v_k)$ of $P$ is the concatenation of the labels of the nodes in the path. For a node $v$ and a path $P$ we use $\lVert\cdot\rVert $ to denote its \emph{string length}, that is, $\lVert v\rVert  = |\ell(v)|$ and $\lVert P\rVert  = |\ell(P)|$. Let $Q$ be a query string. We say that $Q$ \emph{occurs} in $G$ if $Q$ occurs in $\ell(P)$ for some path $P$. In this case, the \emph{exact match} of $Q$ in $G$ can be identified by the triple $(i, P = v_1\ldots v_k, j)$, where $Q = \ell(v_1)[i..] \cdot \ell(v_2) \cdots \ell(v_{k-1}) \cdot \ell(v_k)[..j]$, with $1 \le i \le \lVert v_1\rVert $ and $1 \le j \le \lVert v_k\rVert $, and we call such triple a \emph{substring} of $G$.
Given a substring $(i,P,j)$ of $G$, we define its \emph{left-extension} $\leftext(i,P,j)$ as the singleton $\lbrace \ell(v_1)[i-1] \rbrace$ if $i > 1$ and otherwise as the set of characters $\lbrace \ell(u)[\lVert u\rVert ] \mid (u,v_1) \in E \rbrace$.
Symmetrically, the \emph{right-extension} $\rightext(i,P,j)$ is $\lbrace \ell(v_k)[j+1] \rbrace$ if $j < \lVert v_k\rVert $ and otherwise it is $\lbrace \ell(v)[1] \mid (v_k, v) \in E \rbrace$.
Note that the left (right) extension can be equal to the empty set $\emptyset$, if the start (end) node of $P$ does not have incoming (outgoing) edges. \Cref{fig:subpath} illustrates these concepts.

\begin{figure}
\centering
\begin{tikzpicture}
\node[fill=none,circle] (one-top) at (0,1) {\texttt{AGCTA}};
\node[fill=none,circle] (one-middle) at (0,0) {\texttt{CGCTC}};
\node[fill=none,circle] (one-bottom) at (0,-1) {\texttt{AGCAA}};
\node[fill=none,circle] (two) at (3,0) {\texttt{\underline{ACCGTA}}};
\node[fill=none,circle] (three-top) at (6,0.5) {\texttt{GT\underline{GGAA}}};
\node[fill=none,circle] (three-bottom) at (6,-0.5) {\texttt{GTGGAT}};
\node[fill=none,circle] (four) at (9,0) {\texttt{\underline{CC}AT}};
\draw[->] (one-top) to (two);
\draw[->] (one-middle) to (two);
\draw[->] (one-bottom) to (two);
\draw[->] (two) to (three-top);
\draw[->] (two) to (three-bottom);
\draw[->,line width=1.5pt] (three-top) to (four);
\draw[->] (three-bottom) to (four);
\end{tikzpicture}
\caption{Substring $\mathtt{ACCGTA}$ (underlined) with left-extension $\{\mathtt{A},\mathtt{C}\}$ and right-extension $\{\mathtt{G}\}$, and substring \texttt{GGAACC} (underlined, bold edge) with left-extension $\{\mathtt{T}\}$ and right-extension $\{\mathtt{A}\}$.
\label{fig:subpath}}
\end{figure}
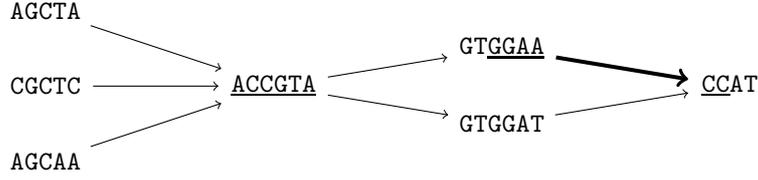

\paragraph*{Basic tools.}

A \emph{trie} or \emph{keyword tree}~\cite{Bri59} of a set of strings is an ordinal tree where the outgoing edges of each node are labeled by distinct symbols (the order of the children follows the order of the alphabet) and there is a unique root-to-leaf path spelling each string in the set; the shared part of two root-to-leaf paths spells the longest common prefix of the corresponding strings. In a \emph{compact trie}~\cite{DBLP:books/cu/Gusfield1997}, the maximal non-branching paths of a trie become edges labeled with the concatenation of labels on the path. The \emph{suffix tree} of $T \in \Sigma^*$ is the compact trie of all suffixes of the string $T' = T \$$. In this case, the edge labels are substrings of $T$ and can be represented in constant space as an interval of $T$. Such tree uses linear space and can be constructed in linear time, assuming that $\sigma \le \lvert T \rvert$, so that when reading the root-to-leaf paths from left to right, the suffixes are listed in their lexicographic order~\cite{DBLP:journals/algorithmica/Ukkonen95,Farach97}. As such, the order spelled by the leaves of the suffix tree form the \emph{suffix array} $\sa_T[1..|T'|]$, where $\sa_T[i]=j$ iff $T'[j..|T'|]$ is the $i$-th smallest suffix in lexicographic order.
When applied to a string $T$, the \emph{Burrows--Wheeler transform} (BWT) \cite{BW94} yields another string $\mbwt_T$ such that $\mbwt_T[i] = T'[\sa[i] - 1]$ (we assume $T'$ to be a circular string, i.e.\ $T'[-1] = T'[|T| + 1] = \$$).

Let $Q[1..m]$ be a query string. If $Q$ occurs in $T$, then the \emph{locus} or \emph{implicit node} of $Q$ in the suffix tree of $T$ is $(v,k)$ such that $Q = XY$, where $X$ is the path spelled from the root to the parent of $v$ and $Y$ is the prefix of length $k$ of the edge from the parent of $v$ to $v$. The leaves in the subtree rooted at $v$, also known as \emph{the leaves covered by $v$}, correspond to all the suffixes sharing the common prefix $Q$. Such leaves form an interval in the $\sa$ and equivalently in the BWT.
Let $aX$ and $X$ be the strings spelled from the root of the suffix tree to nodes $v$ and $w$, respectively. Then one can store a \emph{suffix link} from $v$ to $w$. Suffix links from implicit nodes are called implicit suffix links.

The \emph{bidirectional BWT}~\cite{DBLP:journals/iandc/SchnattingerOG12} is a compact BWT-based index capable of solving the MEM finding problem in linear time~\cite{BCKM20}. The algorithm simulates the traversal of the corresponding suffix trees to find maximal occurrences in both strings: in the first step, it locates the suffix tree nodes (intervals in the BWTs) corresponding to the maximal matches, that is, the MEM strings, and then it uses a cross-product algorithm to extract each MEM from the BWT intervals.

Let $B[1..n]$ be a bitvector, that is, a string over the alphabet $\{0,1\}$. There is a data structure that can be constructed in time $O(n)$ which answers $r = \mathtt{rank}(B,i)$ and $j=\mathtt{select}(B, r)$ in constant time, where the former operation returns the number of $1$s in $B[1..i]$ and the latter returns the position $j\le i$ of the $r$-th $1$ in $B$~\cite{clark1997compact,DBLP:conf/focs/Jacobson89}.
 
Let $D[1..n]$ be an array of integers. There is a \emph{range minimum query} data structure that can be constructed in $O(n)$ time which answers $\rmq_D(i,j)$ in constant time \cite{FH11}, where $\rmq_D(i,j)$ returns an index $k$ such that $D[k]$ is the minimum value in the subarray $D[i..j]$. We will use the following lemma that exploits range queries recursively.

\begin{lemma}
Let $D[1..n]$ be an array of integers. One can preprocess $D$ in $O(n)$ time so that given a threshold  $\Delta$, one can list all elements of $D$ such that $D[i]\leq \Delta$ in linear time in the size of the output.\label{lemma:mutustrick}
\end{lemma}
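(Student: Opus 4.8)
The plan is to reduce the query to repeated range-minimum queries on $D$. As preprocessing, I would build on $D$ the linear-time, constant-query \rmq data structure of~\cite{FH11}, so that $\rmq_D(i,j)$ returns in $O(1)$ time the position of a minimum of $D[i..j]$ (fixing any consistent tie-breaking rule). To answer a query with threshold $\Delta$, I would run a recursive procedure $\mathrm{report}(i,j)$ on the interval $[1..n]$: if $i > j$, return; otherwise set $k \coloneqq \rmq_D(i,j)$; if $D[k] > \Delta$, return, since then every entry of $D[i..j]$ exceeds $\Delta$; otherwise output $k$ and recurse on $\mathrm{report}(i,k-1)$ and $\mathrm{report}(k+1,j)$.

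Correctness follows because the recursion partitions $[1..n]$ into the reported positions and the subintervals strictly between them, pruning a subinterval only once its minimum is already known to exceed $\Delta$; hence an index is reported iff $D[i] \le \Delta$, and each such index is reported exactly once, as the subintervals $[i..k-1]$ and $[k+1..j]$ are disjoint and both exclude the just-reported position $k$. For the running time, observe that the tree of recursive calls is a binary tree whose internal nodes are precisely the calls that report an element (there are $t$ of them, where $t$ is the output size) and whose leaves are the calls that return without reporting. A full binary tree with $t$ internal nodes has $t+1$ leaves, so there are $2t+1$ calls in total, each performing $O(1)$ work; thus the query runs in $O(t+1)$ time, i.e., linear in the size of the output (and $O(1)$ when the output is empty).

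There is no real obstacle here; the only point requiring a little care is ensuring that reported positions are not revisited and that the recursion genuinely reaches every qualifying index, which is handled by the $k-1$ / $k+1$ interval boundaries together with the pruning test $D[k] > \Delta$. The total preprocessing time is $O(n)$, as required.
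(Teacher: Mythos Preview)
Your proof is correct and follows essentially the same approach as the paper: build the \rmq\ structure of~\cite{FH11} and recursively split on the range minimum, pruning when it exceeds $\Delta$. Your binary-tree counting argument for the $O(t+1)$ bound is slightly more explicit than the paper's charging argument, but the idea is identical (the paper also credits this recursion to~\cite{Mut02}).
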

\begin{proof}
We can build the range minimum query data structure on $D$. Consider a recursive algorithm analogous to the one in \cite{Mut02} which starts with $k=\rmq_D(1,n)$. If $D[k] > \Delta$, the algorithm stops as no element in the range is at least $\Delta$. Otherwise, the algorithm reports $k$ and recursively continues with $\rmq_D(1,k-1)$ and $\rmq_D(k+1,n)$. Note that each recursive call performs exactly one $\rmq$ operation: if an element is reported the $\rmq$ is charged to this element, otherwise it is charged to its parent (in the recursion tree), and thus the number of $\rmq$ operations is linear in the output size.
\end{proof}

\section{Finding MEMs in Labeled Graphs}

Let us consider the problem of finding all \emph{maximal exact matches} (MEMs) between a labeled graph $G=(V,E,\ell)$, with $\ell: V \to \Sigma^+$, and a query string $Q \in \Sigma^+$.
\begin{definition}[MEM between a pattern and a graph]\label{def:mem}
Let $G = (V,E,\ell)$ be a labeled graph, with $\ell \colon V \to \Sigma^+$, let $Q \in \Sigma^+$ be a query string, and let $\kappa>0$ be a threshold.
Given a match $(i,P,j)$ of $Q[x..y]$ in $G$, we say that the pair $([x..y],(i,P,j))$ is \emph{left-maximal} (\emph{right-maximal}) if it \emph{cannot} be extended to the left (right, respectively) in both $Q$ and $G$, that is,
\begin{align*}
    (\mathsf{LeftMax}) & \quad x=1 \;\vee\; \leftext(i,P,j) = \emptyset \;\vee\; Q[x-1] \notin \leftext(i,P,j) \qquad\text{and}
     \\
    (\mathsf{RightMax}) & \quad y = \lvert Q \rvert \;\vee\; \rightext(i,P,j) = \emptyset \;\vee\; Q[y+1] \notin \rightext(i,P,j).
\end{align*}
We call $([x..y],(i,P,j))$ a $\kappa$-MEM iff  
    $\mathsf{LeftMax} \vee \lvert \leftext(i,P,j) \rvert\geq 2$,
    $\mathsf{RightMax} \vee \lvert \rightext(i,P,j) \rvert\geq 2$,
and $y-x+1\geq \kappa$, meaning that it is of length at least $\kappa$, it is left-maximal or its left (graph) extension is not a singleton, and it is right-maximal or its right (graph) extension is not a singleton.

\end{definition}

For example, with $Q=\mathtt{CACCGTAT}$, $\kappa=0$, $v$ being the first underlined node of \Cref{fig:subpath}, and $u$ being the second in-neighbor of $v$, then $([1..7],(5,uv,6))$ is a MEM since it is left and right maximal. Note that pair $([2..7],(1,v,6))$ is also a MEM since it is right-maximal, and the left extension of $(1,v,6)$ is not a singleton ($\leftext(v) = \lbrace \mathtt{A}, \mathtt{C} \rbrace$): this match is not left-maximal but our definition includes it as there are at least two different characters to the left.

We use this particular extension of MEMs to graphs---with the additional conditions on non-singletons $\leftext$ and $\rightext$---as it captures all MEMs between $Q$ and $\ell(P)$, where $P$ is a source-to-sink path in $G$.
Moreover, this MEM formulation (with $\kappa=1$) captures LCS through co-linear chaining, whereas avoiding the additional conditions would fail \cite{RCM23}.

The rest of this section is structured as follows. In \Cref{sect:nodeMEMs}, we show how to adapt the MEM finding algorithm of Belazzougui et~al.~\cite{BCKM20} for the case of node MEMs, which ignore the singleton conditions of \Cref{def:mem}. Then, in \Cref{sect:3nodeMEMs}, we show how to further generalize this approach to report all $\kappa$-MEMs spanning exactly $L$ nodes.

\subsection{MEMs in Node Labels\label{sect:nodeMEMs}}

We say that a match $([x..y], (i,v,j))$ is a \emph{node MEM} if it is left and right maximal w.r.t. $\ell(P)$ only in the string sense. In other words, a node MEM is a (string) MEM between $Q$ and $\ell(v)$ (especially in the case $x=1$ or $y=\ell(v)$).
For this, we consider the text
\[
    T_\text{nodes} = \prod_{v \in V} \mathbf{0} \cdot \ell(v),
\]
where $\mathbf{0}\notin \Sigma$ is used as a delimiter to prevent MEMs spanning more than a node label. 

Running the MEM finding algorithm of Belazzougui et~al.~\cite[Theorem 7.10]{BCKM20} on $Q$ and $T_\text{nodes}$ will retrieve exactly the node MEMs we are looking for. 
Given such a MEM $(x_1, x_2, \ell)$, to transform the coordinates of $T_\text{nodes}[x_2..x_2+\ell-1]$
into the corresponding graph substring $(i,P,j)$ we augment the index with a bitvector $B$ marking the locations of $\mathbf{0}$s of $T_\text{nodes}$, so that $r = \mathtt{rank}(B,x_2)$ identifies the corresponding node of $G$, $i =x_2-\mathtt{select}(B, r)$ and $j =i+\ell-1$.

\begin{lemma}\label{lemma:nodeMEMs}
    Given a labeled graph $G = (V,E,\ell)$, with $\ell: V \to \Sigma^+$, a query string $Q$, and a threshold $\kappa>0$, we can compute all node MEMs of length at least $\kappa$ between $Q$ and $G$ in time $O(n+m+N_\kappa)$, where $n$ is the total length of node labels, $m = |Q|$, and $N_\kappa$ is the number of output MEMs.
\end{lemma}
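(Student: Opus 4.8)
The plan is to reduce node-MEM finding to a single string-MEM instance and then translate coordinates back into $G$. First I would form the string $T_{\text{nodes}} = \prod_{v \in V}\mathbf{0}\cdot\ell(v)$ and a bitvector $B$ marking the positions of its $\mathbf{0}$ delimiters, equipped with constant-time $\mathtt{rank}/\mathtt{select}$ support. Since every node label is nonempty, $\lvert V\rvert \le n$, so $\lvert T_{\text{nodes}}\rvert = \lvert V\rvert + n \le 2n$; hence building $T_{\text{nodes}}$, $B$, and (after appending the end marker $\$$) the bidirectional BWT index of $Q$ and $T_{\text{nodes}}$ all take $O(n+m)$ time, the alphabet having grown only by $\mathbf{0}$ and $\$$ and therefore still satisfying $\sigma \le \lvert T_{\text{nodes}}\rvert$.

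Next I would argue that the reduction is faithful: the string MEMs between $Q$ and $T_{\text{nodes}}$ are in bijection with the node MEMs of $G$. No match can straddle a $\mathbf{0}$, since $\mathbf{0}\notin\Sigma$ and $Q\in\Sigma^+$, so every string MEM $(x_1,x_2,\ell)$ between $Q$ and $T_{\text{nodes}}$ lies inside the block of a single node, say the $r$-th node $v_r$ with $r=\mathtt{rank}(B,x_2)$; setting $i=x_2-\mathtt{select}(B,r)$ and $j=i+\ell-1$ gives the candidate $([x_1..x_1+\ell-1],(i,v_r,j))$. On the $T_{\text{nodes}}$ side the occurrence can be extended only within the block of $v_r$, because the flanking symbols are a delimiter $\mathbf{0}$, the end marker, or the boundary of $T_{\text{nodes}}$, so $T_{\text{nodes}}$-maximality coincides exactly with string-maximality relative to $\ell(v_r)$, while $Q$-side maximality is untouched; thus the candidate is a node MEM. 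Conversely, a string MEM between $Q$ and some $\ell(v)$ occurs inside the block of $v$ in $T_{\text{nodes}}$, and its left/right maximality in $\ell(v)$, whether forced by a genuine mismatch or by reaching an end of $\ell(v)$ (the cases $i=1$ or $j=\lVert v\rVert$), is precisely maximality in $T_{\text{nodes}}$, since an end of $\ell(v)$ is flanked by $\mathbf{0}$, the end marker, or the boundary of $T_{\text{nodes}}$, none of which lies in $\Sigma$.

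The efficiency part needs care to respect the threshold $\kappa$: invoking the algorithm of Belazzougui et~al.~\cite[Theorem~7.10]{BCKM20} as a black box enumerates \emph{all} string MEMs in time $O(n+m+N_1)$, which is not $O(n+m+N_\kappa)$ when few node MEMs are long. To get the claimed bound I would open that algorithm into its two phases (as recalled in the preliminaries): the first phase locates, in $O(n+m)$ time, the loci (suffix-tree nodes, i.e. BWT intervals) of all MEM strings, each carrying its string-depth, and each such locus yields at least one MEM. I would store these depths in an array and apply \Cref{lemma:mutustrick} to the array of \emph{negated} depths to list, in time linear in their number, exactly the loci of depth $\ge\kappa$; since every such locus yields at least one node MEM of length $\ge\kappa$ and distinct loci yield disjoint MEM sets, there are at most $N_\kappa$ of them, so this costs $O(N_\kappa)$. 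Finally I would run the cross-product extraction phase only on those loci, translating each emitted string MEM into the corresponding node MEM in $O(1)$ via $B$ as above, for a total of $O(n+m)+O(N_\kappa)=O(n+m+N_\kappa)$.

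I expect the main obstacle to be this second, efficiency-related point: a literal call to the BCKM algorithm pays for every short node MEM, so the argument must reach into its internal ``find MEM-string loci / cross-product'' structure and combine it with \Cref{lemma:mutustrick} to filter by length \emph{before} extracting occurrences. The correctness direction is comparatively routine; its only slightly delicate aspect is verifying that the $\mathbf{0}$ delimiters and the end marker make $T_{\text{nodes}}$-maximality agree with node-MEM maximality exactly at block boundaries.
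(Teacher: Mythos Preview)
Your proposal is correct and follows the same approach as the paper: form $T_{\text{nodes}}$ with $\mathbf{0}$ delimiters, run the Belazzougui et~al.\ string-MEM algorithm on $Q$ and $T_{\text{nodes}}$, and translate each output back to a node match via $\mathtt{rank}/\mathtt{select}$ on the bitvector $B$. The only divergence is your treatment of the threshold $\kappa$, which you flag as the main obstacle; in fact it is not one. The exploration phase of BCKM visits each MEM string $Q'$ knowing its length, and the paper (like the original algorithm) simply suppresses the cross-product step whenever $|Q'|<\kappa$; since the exploration is $O(n+m)$ regardless and the cross-product is output-linear, this already yields $O(n+m+N_\kappa)$. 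Your detour through \Cref{lemma:mutustrick} on negated depths is correct but superfluous: the first phase already touches every MEM-string locus, so there are at most $O(n+m)$ of them and a plain scan discarding the short ones stays within budget.
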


\begin{proof}
    The claim follows by applying the MEM finding algorithm of Belazzougui et~al.~\cite[Theorem 7.10]{BCKM20} on $T_\text{nodes}$ and $Q$, and using $B$ to extract the corresponding graph matches. We include a brief explanation of the original algorithm (following the later adaptation of~\cite[Algorithm 11.3]{MBCT15}) as we will modify this algorithm later. In our context, the algorithm takes as inputs the bidirectional BWT index on $T_\text{nodes}\$$ and the bidirectional BWT index on $Q\$$, and produces MEM strings $Q'$, with $|Q'|\geq \kappa$, and four BWT intervals $[i_{G}..j_{G}]$, $[i'_{G}..j'_{G}]$, $[i_Q..j_Q]$, and $[i'_Q..j'_Q]$, for each such $Q'$.
    The first two BWT intervals are such that $Q'$ occurs in $T_\text{nodes}$ $j_G-i_G+1=j'_G-i'_G+1$ times: the suffixes of $T_\text{nodes}$ having $Q'$ as a prefix have lexicographic ranks between $i_G$ and $j_G$, and the prefixes of $T_\text{nodes}$ having $\overline{Q'}$ (the reverse of $Q'$) as a suffix have co-lexicographic (lexicographic of the reverse) ranks between $i'_G$ and $j'_G$. Analogously, the last two BWT intervals are such that $Q'$ occurs in $Q$ $j_{Q}-i_{Q}+1=j'_{Q}-i'_{Q}+1$ times: the suffixes of $Q$ having $Q'$ as a prefix have lexicographic ranks between $i_{Q}$ and $j_{Q}$, and the prefixes of $Q$ having $\overline{Q'}$ as a suffix have co-lexicographic ranks between $i'_{Q}$ and $j'_{Q}$.
    For each of these four BWT intervals reported, the algorithm runs a \emph{cross-product} routine outputting all MEMs whose MEM string is $Q'$.
    Globally, the algorithm is linear in both the input and output size, since the exploration of MEM strings $Q'$ takes linear time in the size of the input strings, and since in total the cross-product routine runs in linear time in the number of BWT intervals considered as well as in the number of output MEMs~\cite[Theorem 7.10]{BCKM20}. 
    To see the linearity with respect to the input strings, let us study how the MEM strings $Q'$ are explored. The algorithm starts with intervals covering all suffixes/prefixes corresponding to a match of the empty string. It then executes recursively, so we can consider a generic step with intervals $[i_{G}..j_{G}]$, $[i'_{G}..j'_{G}]$, $[i_Q..j_Q]$, and $[i'_Q..j'_Q]$ corresponding to a match of the string $Q'$. Moreover, the algorithm maintains the invariant that the current match $Q'$ is right-maximal. In the recursive step, the algorithm first checks whether the match is left-maximal, in which case it reports the corresponding MEMs using the aforementioned cross-product algorithm. It then extends the match to the left with every possible character extension. For each such extension $aQ'$, it checks whether the extension contains a right-maximal match: if this is not the case the suffix tree of $T_\text{nodes}\#Q$ ($\# \not\in\Sigma$) does not have an internal node corresponding to $aQ'$, as all suffixes starting with $aQ'$ continue with the same symbol, otherwise (the extension contains a right-maximal match) the suffix tree has an internal node corresponding to $aQ'$. This exploration is bounded by the number of \emph{implicit Weiner links} in the suffix tree of $T_\text{nodes}\#Q$, which is linear in the input length~\cite[Observation 1]{BCKM20}. 
\end{proof}

\subsection{MEMs Spanning Exactly $L$ Nodes\label{sect:3nodeMEMs}}

Given a threshold $\kappa$, we want to find all $\kappa$-MEMs $([x..y],(i,P,j))$ spanning exactly $L$ nodes in $G$, that is, $\lvert P \rvert = L$.
Note that the MEMs obtained for $L=1$ are a subset of the ones obtained in \Cref{lemma:nodeMEMs}: for a node MEM $([x..y], (i, v, j))$ it might hold that $i = 1$ and $\lbrace Q[x-1] \rbrace = \leftext(1,v,j)$, or that $j = \lVert v\rVert$ and $\lbrace Q[y+1] \rbrace = \rightext(i,v,j)$.

As per \Cref{def:mem}, MEMs cannot be recognized without looking at the context of the paths in $G$ (sets $\leftext$ and $\rightext$).  
With this in mind, we consider the text
\begin{equation}\label{eq:tk}
    T_L \coloneqq
    \mathbf{0} \cdot \prod_{(u_1, \dots, u_L) \in \mathcal{P}_G^L} \Big(\leftextc(u_1) \cdot \ell(u_1) \cdots \ell(u_L) \cdot \rightextc(u_L) \cdot \mathbf{0} \Big),
\end{equation}
where $\leftextc(u)=c$ when $\leftext(u)=\{c\}$ and otherwise $\leftextc(u)=\#$, $\rightextc(u)=d$ when $\rightext(u)=\{d\}$ and otherwise $\rightextc(u)=\#$, $\mathbf{0}\neq \#$, $\mathbf{0},\# \notin \Sigma$, and 
\[
\mathcal{P}^L_G \coloneqq
    \bigg\lbrace
    P \biggm\vert \substack{\displaystyle P \;\text{path of}\; G, \\ \displaystyle \lvert P \rvert = L} 
    \bigg\rbrace.
\]

We have added the unique left- and right-extension symbols $c$ and $d$ to avoid reporting exact matches that can potentially be extended to longer paths. When these extensions are not unique (or empty), one can safely report a MEM, since there is a path diverting with a symbol different from that of the pattern (or the path cannot be extended further).
With these left- and right-extension symbols, it suffices to modify the MEM finding algorithm of \Cref{sect:nodeMEMs} to use some extra information regarding the starting position of each suffix inside string the $\ell(P)$, as explained next.

To avoid reporting MEMs spanning less than $L$ nodes (only if $L > 1$), we use an array $D[1..|T_L|]$ such that $D[k]=\infty$ if the $k$-th suffix $T_L[s..|T_L|]$ of $T_L$ in the lexicographic order is such that $T_L[s+1..|T_L|]$ is not starting inside node $u_1$ of a path $P=u_1 \cdots u_L$, otherwise $D[k]=|\ell(P)|-\ell(u_L)-i+2$, where suffix $T_L[s+1..|T_L|]$ starts at position $i$ inside $u_1$. That is, when $D[k]\neq\infty$, it tells the distance of the $k$-th suffix of $T_L$ in the lexicographic order to the start of the last node of the corresponding path. With the help of \Cref{lemma:mutustrick} on $D$, we can then adapt the MEM finding algorithm to output suffixes corresponding to MEMs spanning exactly $L$ nodes as follows.

\begin{lemma}\label{lem:mem_m_nodes}
    Let alphabet $\Sigma$ be of constant size.
    Given a labeled graph $G = (V,E,\ell)$, a pattern $Q \in \Sigma^m$, a threshold $\kappa \ge 1$, and an integer $L \ge 1$, we can compute an encoding of all MEMs of length at least $\kappa$ and spanning exactly $L$ nodes of $G$ in time $O(m + \lvert T_L \rvert + M_{\kappa,L})$.
    Here, $T_L$ is defined as in \Cref{eq:tk} and $M_{\kappa,L}$ is the number of output MEMs.
\end{lemma}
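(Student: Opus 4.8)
The plan is to build the text $T_L$ from Equation~\eqref{eq:tk}, index it together with $Q$ using the bidirectional BWT, and run essentially the same MEM finding algorithm as in \Cref{lemma:nodeMEMs}, but with two modifications: (i) the extra symbols $c$, $d$, $\#$, and $\mathbf{0}$ automatically enforce \Cref{def:mem}'s singleton conditions, and (ii) the array $D$ is used to filter out matches that do not actually span $L$ nodes. First I would observe that $|T_L| = O\bigl(n \cdot L \cdot d^{L-1}\bigr)$ because $\lvert \mathcal{P}_G^L \rvert \le n \cdot d^{L-1}$ (each node begins at most $d^{L-1}$ length-$L$ paths) and each path contributes $O(L)$ symbols plus a constant for the extension symbols and the delimiter; hence constructing $T_L$, its bidirectional BWT index, the bitvector $B$ marking the $\mathbf{0}$s, and the array $D$ with the \rmq structure of \Cref{lemma:mutustrick} all take $O(\lvert T_L \rvert)$ time. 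Building $D$ itself requires knowing, for each suffix of $T_L$, whether the previous symbol lies inside the first node $u_1$ of its path block and, if so, the distance from that position to the start of the last node $u_L$; this is read off from the suffix array during construction in linear time.

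Next I would argue correctness of the match-recognition step. Running the Belazzougui et~al.\ algorithm on $T_L$ and $Q$ produces, for each right-maximal and left-maximal (in the string sense, over $T_L$) MEM string $Q'$ with $|Q'| \ge \kappa$, the four BWT intervals. The key point is that because every path block in $T_L$ is flanked by $\leftextc(u_1)$ on the left and $\rightextc(u_L)$ on the right, and these equal $\#$ (a symbol not in $\Sigma$ and not matching anything in $Q$) precisely when the corresponding graph extension is non-singleton or empty, string-left-maximality of $Q'$ against $T_L$ at a given occurrence is equivalent to the condition $\mathsf{LeftMax} \vee \lvert \leftext \rvert \ge 2$ of \Cref{def:mem}; symmetrically for the right side. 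When the extension is a genuine singleton $\{c\}$, the symbol $c$ sits in $T_L$ exactly where $\ell(u_0)[\lVert u_0\rVert]$ would, so an extendable match is correctly suppressed. The $\mathbf{0}$ delimiters play the same role as in \Cref{sect:nodeMEMs}, preventing spurious matches that straddle two path blocks. Thus the MEM strings reported are exactly those admissible under \Cref{def:mem}, restricted to occurrences lying within a single length-$L$ path block.

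The remaining obstacle---and the step I expect to be the most delicate---is filtering to exactly $L$ nodes and recovering the graph coordinates $(i, P, j)$ within the stated time bound. A match of $Q'$ inside a path block $\leftextc(u_1)\cdot\ell(u_1)\cdots\ell(u_L)\cdot\rightextc(u_L)$ need not touch all $L$ nodes; I want to keep only those that start in $u_1$ (so that the match, being inside the block, necessarily reaches $u_L$ only if it is long enough, and we additionally require it to reach into $u_L$). The array $D$ encodes exactly this: for a lexicographic rank $k$ of a suffix of $T_L$, $D[k]$ is finite iff that suffix's predecessor position lies in $u_1$, and then $D[k]$ is the distance from that position to the start of $u_L$; a match of length $|Q'|$ starting at rank $k$ spans all $L$ nodes iff $D[k] \le |Q'|$. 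So when the algorithm reaches a candidate MEM string $Q'$ with $T_L$-interval $[i_G..j_G]$, I invoke \Cref{lemma:mutustrick} with threshold $\Delta = |Q'|$ restricted to the subarray $D[i_G..j_G]$ to enumerate, in time linear in the number of qualifying occurrences, precisely the occurrences spanning $L$ nodes; these then feed the cross-product routine against the $Q$-side interval, and each such pair is charged to an output MEM. (To restrict \Cref{lemma:mutustrick} to a subinterval one simply starts the recursion at $\rmq_D(i_G, j_G)$ instead of $\rmq_D(1,n)$.) Finally, for each surviving occurrence at $T_L$-position $p$, I use $B$ to find the enclosing path block and then a precomputed list of the $L$ nodes of that block together with the within-block offsets of their boundaries to output $(i, P, j)$ in $O(L) = O(1)$ time; summing over all outputs gives the $M_{\kappa,L}$ term. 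Combining the $O(\lvert T_L \rvert)$ construction and traversal cost, the $O(m)$ cost of indexing $Q$, and the $O(M_{\kappa,L})$ output cost yields the claimed $O(m + \lvert T_L \rvert + M_{\kappa,L})$ running time, and since $L$ and $\sigma$ are constant the hidden factors are constant; the one subtlety to verify carefully is that no qualifying occurrence is counted twice (different length-$L$ paths sharing a prefix appear as distinct blocks in $T_L$, so a MEM reachable along two paths is reported once per path, which matches the definition of a MEM as a pair $([x..y],(i,P,j))$ with $P$ specified).
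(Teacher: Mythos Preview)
Your proposal follows the paper's approach and your correctness discussion (role of the extension symbols $\leftextc$, $\rightextc$, $\#$, the delimiter $\mathbf{0}$, and the bitvector $B$) is sound. The one genuine gap is in the time analysis, and it comes from \emph{where} you place the \Cref{lemma:mutustrick} call relative to the cross-product.

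You apply \Cref{lemma:mutustrick} on the full $T_L$-interval $[i_G..j_G]$ of the MEM string $Q'$, enumerate all qualifying ranks $k$ (those spanning $L$ nodes), and only then run the cross-product against the $Q$-side, claiming ``each such pair is charged to an output MEM.'' But a qualifying $k$ with $T_L$-context $(c,d)$ need not contribute any output: it is possible that every occurrence of $Q'$ in $Q$ has left context $c$ \emph{or} right context $d$, so that no $k'$ satisfies both $a\neq c$ and $b\neq d$, while $Q'$ is still a MEM string (left- and right-maximal in the combined text) because of context variety among \emph{other} $T_L$-occurrences or among the $Q$-occurrences themselves. The work spent enumerating such orphan $k$'s cannot be charged to $M_{\kappa,L}$, and summed over all MEM strings it is not bounded by $O(m+|T_L|)$ in any obvious way.

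The paper avoids this by performing the \Cref{lemma:mutustrick} filtering \emph{inside} the cross-product: for each tuple $(a,b,c,d)$ with $a\neq c$, $b\neq d$, $aQ'b$ occurring in $Q$ and $cQ'd$ occurring in $T_L$, it applies \Cref{lemma:mutustrick} to the already left-extended $T_L$-sub-interval corresponding to $cQ'd$. Then every qualifying $k$ returned pairs with at least one $k'$ (the $aQ'b$-interval in $Q$ is non-empty by assumption) and is charged to an output MEM; tuples whose \Cref{lemma:mutustrick} call returns nothing cost $O(1)$ each, contributing only an $O(\sigma^4)$ overhead per MEM string, which is absorbed since $\sigma$ is constant. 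This placement is also why the paper defines $D[k]$ with respect to position $s{+}1$ rather than $s$ (your ``predecessor position'' variant is shifted the wrong way): the ranks being filtered lie in the left-extended interval for $cQ'$, so the match $Q'$ begins one position to the right of the suffix head.
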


\begin{proof}
    We build the bidirectional BWT indexes for $T_L\$$ and $Q\$$, the suffix array of $T_L\$$, and preprocess $D[1..|T_L|]$ as in \Cref{lemma:mutustrick} in time $O(\lvert T_L \rvert+\lvert Q \rvert)$. We also preprocess, in linear time, a bitvector $B$ marking the locations of $\mathbf{0}$s of $T_L$ so that we can map in constant time a position $i$ in $T_L$ to the $r$-th path appended to $T_L$ for $r = \mathtt{rank}(B,i)$.

    The only modification of~\cite[Theorem 7.10]{BCKM20} required to only output MEMs spanning exactly $L$ nodes (and only if $L>1$) is to change its very last step when considering a MEM candidate $Q'$ with $|Q'|\geq \kappa$. Namely, the cross-product routine loops over all characters $a,b,c,d \in \Sigma \cup \{\#\}$ with $a\neq c$ and $b\neq d$, such that $aQ'b$ is a substring of $Q$ and $cQ'd$ is a substring of $T_L$. It then computes (in constant time) the intervals $[i_{aQ'b}..j_{aQ'b}]$, $[i'_{aQ'b}..j'_{aQ'b}]$, $[i_{cQ'd}..j_{cQ'd}]$, and $[i'_{cQ'd}..j'_{cQ'd}]$, where the first two are the intervals in the bidirectional BWT on $T_L$ corresponding to $aQ'b$ and the latter two are the intervals in the bidirectional BWT on $Q$ corresponding to $cQ'd$. After that, it outputs a triple $(k,k',|Q'|)$ representing each MEM, where $k\in [i_{aQ'b}..j_{aQ'b}]$ and $k' \in [i_{cQ'd}..j_{cQ'd}]$. It suffices to modify the first iteration using Lemma~\ref{lemma:mutustrick} to loop only over $k \in [i_{aQ'b}..j_{aQ'b}]$ such that $D[k]\leq |Q'|+1$. Our claims are that the running time stays linear in the input and output size on constant-size alphabet, and that only MEMs spanning exactly $L$ nodes are output. The latter claim follows directly on how array $D$ is defined and used with \Cref{lemma:mutustrick}. For the former claim, the cross-product part of the original algorithm is linear in the output size (also on non-constant-size alphabet) since for each combination of left- and right-extension considered, the work can be charged to the output. In our case, due to the use of \Cref{lemma:mutustrick}, some combinations may lead to empty outputs introducing an alphabet-factor (constant) multiplier on the input length.    
\end{proof}

\begin{remark}
Note that the algorithm in \Cref{lem:mem_m_nodes} works in time $O(m + n \cdot L \cdot d^{L-1} + M_{\kappa,L})$, where $n = \sum_{v \in V} \ell(v)$ is the total label length of $G$ and $d$ is the maximum in- or out-degree of a node.
Indeed, $T_{L}$ corresponds to the concatenation of length-$L$ paths of $G$: the number of paths containing label $\ell(v)$ (for a node $v$) is at most $L\cdot d^{L-1}$.
\end{remark}

\section{MEMs in Elastic Founder Graphs \label{sect:efgMEMs}}

The approach of Section~\ref{sect:3nodeMEMs} is exponential on $L$, so we can only use it for constant $L$ if aiming for a poly-time MEM finding routine.
For general labeled graphs, this may be the best achievable, as we cannot find MEMs with a threshold $\kappa$ between a pattern $Q$ and $G$ in truly sub-quadratic time unless the \emph{Orthogonal Vectors Hypothesis} ($\mathsf{OVH}$) is false~\cite{equi2023complexity},
and exponential-time indexing is required for truly sub-quadratic MEM finding with threshold $\kappa$ unless $\mathsf{OVH}$ is false~\cite{EMT21}: finding MEMs between $Q$ and $G$ finds matches of $Q$ in $G$, and if $\kappa = \lvert Q \rvert$ MEM finding is exactly equivalent to matching a pattern in a labeled graph.
To have a poly-time indexing of $G$ that can solve MEM finding in truly sub-quadratic time,
it is necessary to constrain the family of graphs in question.
Therefore, we now focus on indexable Elastic Founder Graphs (\textefgs), that are a subclass of labeled directed acyclic graphs (labeled DAGs) having the feature that they support poly-time indexing for linear-time queries~\cite{Equietal22}.
We will show that the same techniques used to query if $Q$ appears in \textefg\ $G$ can be extended to solve MEM finding on $G$ with arbitrary length threshold $\kappa$.

\begin{definition}[Elastic Founder Graph~\cite{Equietal22}]
    Consider a \emph{block graph} $G = (V,E,\ell)$, where $\ell \colon V \to \Sigma^+$, $V$ is partitioned into $k$ \emph{blocks} $V_1$, \dots, $V_k$, and edges $(u,v) \in E$ are such that $u \in V_i$, $v \in V_{i+1}$ for some $i \in [1..k-1]$.
    We say that a block graph is an \emph{indexable Elastic Founder Graph} (\textefg) if the \emph{semi-repeat-free property} holds: for each $v \in V_i$, $\ell(v)$ occurs in $G$ only as prefix of paths starting with some $w \in V_i$.
\end{definition}

Note that the semi-repeat-free property allows a node label to be prefix of other node labels in the same block, whereas it forbids them to appear as a proper suffix of other node labels nor anywhere else in the graph.
\Textefgs\ can be obtained from a set of aligned sequences, in a way such that the resulting \textefg\ spells the sequences but also their \emph{recombination}: for a gapless alignment, we can build in time linear to the size of the alignment an optimal \textefg\ with minimum height $H$ of a block, where the height of block $V_i$ is defined as $\lvert V_i \rvert$, solution generalized to the case with gaps by using an alternative height definition~\cite{DBLP:conf/cpm/0001M22}.

Let us now consider MEM finding with threshold $\kappa$ on an \textefg\ $G = (V,E,\ell)$.
We can use the general $\kappa$-MEM finding algorithm of \Cref{lem:mem_m_nodes} between $Q$ and $G$ spanning exactly $L$ nodes, with $L = 1,2,3$; then, we find $\kappa$-MEMs that span longer paths with a solution specific to \textefgs.
To find all MEMs between $Q$ and $G$ spanning more than three nodes, we index
\[
    T_3' \coloneqq \mathbf{0} \cdot \prod_{(u,v),(v,w) \in E} \Big( \ell(u)\ell(v)\ell(w)\mathbf{0} \Big)
    \qquad\text{where}\; \mathbf{0} \notin \Sigma.
\]
Equi et al.~\cite{Equietal22} showed that the suffix tree of $T_3'$ can be used to query string $Q$ in $G$, taking time $O(\lvert Q \rvert)$.
We now extend this algorithm to find MEMs between $Q$ and \textefg\ $G$ with threshold $\kappa$ and spanning more than $3$ nodes.
For simplicity, we describe a solution for case $\kappa = 1$ and later argue case $\kappa > 1$.

First, we augment the suffix tree of $T_3'$:
\begin{itemize}
    \item we mark all implicit or explicit nodes $\overline{p}$ such that the corresponding root-to-$\overline{p}$ path spells $\ell(u)\ell(v)$ for some $(u,v) \in E$, so that we can query in constant time if $\overline{p}$ is such a node;
    \item we compute pointers from each node $\overline{p}$ to an arbitrarily chosen leaf in the subtree rooted at $\overline{p}$;
    \item for each node $v \in V$ of the \textefg\ we build trie $T_v$ for the set of strings $\lbrace \overline{\ell(u)} : (u,v) \in E \rbrace$;
    \item for each leaf, we store the corresponding path $uvw$ and the starting position of the suffix inside $\ell(u)\ell(v)\ell(w)$.
\end{itemize}

\begin{observation}[{\cite[Lemma 9]{Equietal22}}]\label{obs:atomicedges}
Given an \textefg\ $G=(V,E,\ell)$, for each $(v,w) \in E$ string $\ell(v)\ell(w)$ occurs only as prefix of paths starting with $v$.
Thus, all occurrences of some string $S$ in $G$ spanning at least four nodes can be decomposed as $\alpha \ell(u_2) \cdots \ell(u_{L-1}) \beta$ such that: $(i)$ $u_2 \cdots u_{L-1}$ is a path in $G$ and $u_2$, \dots, $u_{L-1}$ are unequivocally identified;
$(ii)$ $\alpha = \ell(u_1)[i..\lVert u_1 \rVert]$ with $1 \le i \le \lVert u_1 \rVert$ for some $(u_1,u_2) \in E$;
and $(iii)$ $\beta = \ell(u_L)$ for some $(u_{L-1},u_L) \in E$ or $\beta = \ell(u_L) (\ell(u_{L+1})[1..j])$ with $1 \le j < \lVert u_{L+1} \rVert$ for some $(u_{L-1},u_L),(u_L,u_{L+1}) \in E$.
Note that $\alpha,\beta \neq \varepsilon$ and $\beta$ has as prefix a full node label, whereas $\alpha$ might spell any suffix of a node label.
\end{observation}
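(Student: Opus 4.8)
The observation bundles two things: the fact that for every edge $(v,w)$ the string $\ell(v)\ell(w)$ occurs in $G$ only as a prefix of paths starting at $v$ — which is exactly \cite[Lemma 9]{Equietal22}, itself a short consequence of the semi-repeat-free property — together with a structural statement about occurrences spanning at least four nodes. The plan is to take the first fact as given (citing it) and derive the decomposition from it; concretely, the decomposition is just a careful re-indexing of a generic occurrence, and the only real work is in justifying the word ``unequivocally''.

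For the construction, I would start from an occurrence of $S$ along a path $w_1 \cdots w_m$ with $m \ge 4$, so that $S = \ell(w_1)[i..\lVert w_1 \rVert]\cdot \ell(w_2)\cdots \ell(w_{m-1})\cdot \ell(w_m)[1..j]$ for some $1 \le i \le \lVert w_1 \rVert$ and $1 \le j \le \lVert w_m \rVert$. Put $\alpha := \ell(w_1)[i..\lVert w_1 \rVert]$, a nonempty suffix of a node label, and split on whether the last node is fully spanned: if $j = \lVert w_m \rVert$ set $L := m$, $u_r := w_r$, and $\beta := \ell(u_L)$; otherwise set $L := m-1$, $u_r := w_r$ for $r \le L$, $u_{L+1} := w_m$, and $\beta := \ell(u_L)\,\ell(u_{L+1})[1..j]$. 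In both cases $m \ge 4$ forces $L \ge 3$, so $u_2 \cdots u_{L-1}$ is a (nonempty) sub-path all of whose labels occur in full inside $S$, the edges $(u_1,u_2)$, $(u_{L-1},u_L)$ (and $(u_L,u_{L+1})$ in the second case) exist, $\alpha,\beta \neq \varepsilon$, and $\beta$ has the full node label $\ell(u_L)$ as a prefix; and $S = \alpha\, \ell(u_2)\cdots \ell(u_{L-1})\, \beta$. This matches $(ii)$, $(iii)$ and the trailing remark, and the note that $\alpha$ may be any suffix of a node label is immediate since $i$ is unconstrained.

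For $(i)$ I would argue that the string $\ell(u_2)\cdots \ell(u_{L-1})\,\ell(u_L)$ — a concatenation of $L-1 \ge 2$ consecutive full node labels inside $S$ — admits only one factorization into node labels along a path of $G$, and in fact sits rigidly inside $S$: its last block $\ell(u_{L-1})\ell(u_L)$ is of the form $\ell(v)\ell(w)$ with $(v,w)\in E$, so by \cite[Lemma 9]{Equietal22} it determines $v$ and $w$ and the fact that any occurrence of it (inside $\ell(P)$ for any path $P$ realizing $S$) must begin at a node boundary of $P$; iterating the same observation leftward along the path forces all of $u_2,\dots,u_{L-1}$ as well as the position inside $S$ at which $\ell(u_2)$ starts. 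The step I expect to be the main obstacle is precisely this induction: the thing to rule out is a second, misaligned realization of $S$ that would carve out a different internal sub-path, and dispelling it requires converting a hypothetical alternative node boundary into a block index and using that a path of a block graph meets each block at most once, so that a block-$j$ node label cannot occur inside $\ell(P)$ except at $P$'s unique block-$j$ boundary — a contradiction with semi-repeat-freeness. (If one allows a block to contain two distinct nodes with equal labels, ``unequivocally identified'' should be read up to that harmless equivalence, which does not affect any later use.) Everything else is routine bookkeeping about the case split on the last node and the indices $i,j$.
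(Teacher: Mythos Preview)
The paper does not actually prove this observation: it states the edge property as a citation of \cite[Lemma 9]{Equietal22} and records the decomposition as an immediate ``Thus'' without any argument, so there is no proof in the paper to compare against. Your proposal is therefore strictly more detailed than what the paper provides, and it is correct.

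Your construction of the decomposition via the case split on whether the last node is fully spanned is exactly the intended reading, and your justification of ``unequivocally'' is the right one: anchor an edge label $\ell(u_k)\ell(u_{k+1})$ inside $S$, invoke the cited lemma to force a node boundary at that position in \emph{any} realizing path, and iterate. One small point worth making explicit in a final write-up is why the boundary you locate is precisely the \emph{second} node boundary of the alternative realization (i.e.\ why $|\alpha'| = |\alpha|$): if $\alpha$ swallowed two or more full node labels of the alternative path $P'$, then some full $\ell(w')$ would occur strictly inside the single node label $\ell(u_1)$, contradicting semi-repeat-freeness directly. You allude to this with the block-index remark, but spelling out this one-line contradiction would make the induction base airtight. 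Your caveat about two nodes in the same block sharing a label is accurate and appropriately flagged; the paper silently ignores this edge case.
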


The strategy to compute long MEMs between $Q$ and $G$ is to first consider, with a left-to-right scan of $Q$, all MEMs $([x..y],(i,P,j))$ such that:
\begin{enumerate}[(i)]
    \item\label{cond:i} $\lvert P \rvert > 3$;
    \item\label{cond:ii} they satisfy conditions $\mathsf{LeftMax}$ and $\mathsf{RightMax}$ of \Cref{def:mem}; and
    \item\label{cond:iii} are maximal with respect to substring $Q[x..y]$, that is, there is no other MEM $([x'..y'],(i',P',j'))$ with $x \le x' \le y' \le y$.
\end{enumerate}
Next, we will describe how to modify our solution to compute also all the other MEMs spanning more than $3$ nodes.
Due to \Cref{obs:atomicedges}, if $\alpha \ell(u_2) \cdots \ell(u_{L-1}) \beta$ is a decomposition of $Q[x..y]$, all MEMs $([x'..y'],(i',P',j'))$ with $x \le x' < y' \le y$ spanning more than 3 nodes are constrained to involve some $u_i$ with $i \in [2..L-1]$.

Consider the following modification of \cite[Theorem 8]{Equietal22} that matches $Q[1..y]$ in $G$.
Let $\overline{p}$ be the suffix tree node of $T_3'$ reached from the root by spelling $Q[1..y]$ in the suffix tree until we cannot continue with $Q[y+1]$:
\begin{enumerate}
	\item\label{case:1} If we cannot continue with $\mathbf{0}$, $Q[1..y]$ is part of some MEM between $Q$ and $G$ spanning at most $3$ nodes, so we ignore it, take the suffix link of $\overline{p}$ and consider matching $Q[2..y]$ in $G$.
	\item\label{case:2} If we can continue with $\mathbf{0}$ and the occurrences of $Q[1..y]$ span at most two nodes in $G$, then we also take the suffix link of $\overline{p}$ and consider matching $Q[2..y]$.
    Thanks to the semi-repeat-free property, we can check this condition by retrieving any leaf in the subtree rooted at node $\overline{p}_\mathbf{0}$, reached by reading $\mathbf{0}$ from $\overline{p}$.
    	\item\label{case:3} In the remaining case, $Q[1..y] = \alpha \ell(u_2) \ell(u_3)$ for exactly one $u_2 \in V$, with $(u_2,u_3) \in E$, due to \Cref{obs:atomicedges}, and we follow the suffix link walk from $\overline{p}$ until we find the marked node $\overline{q}$ corresponding to $\ell(u_2) \ell(u_3)$: from $\overline{q}$ we try to match $Q[y+1..]$ until failure, matching $Q[y+1..y']$ and reaching node $\overline{r}$.
\end{enumerate}
By repeating the suffix walk and tentative match of case \ref{case:3} until we cannot read $\mathbf{0}$ from the failing node, we find the maximal prefix $Q[1..y]$ occurring in $G$ and its decomposition $\alpha \ell(u_2) \cdots \ell(u_{L-1}) \beta$ as per \Cref{obs:atomicedges}.
Indeed, we can find unique nodes $u_2$, \dots, $u_{L-1}$ by analyzing the (arbitrarily chosen) leaf of the subtree rooted at $\overline{q}$ in every iteration of case \ref{case:3}.
Moreover, we can retrieve:
\begin{itemize}
    \item set $U_1$ of pairs $(i,u)$ such that $(u,u_2) \in E$ and $\alpha = \ell(u)[i..\lVert u \lVert]$, by iterating over the leaves of $\overline{p}$;
    \item unique node $u_L$ such that $(u_{L-1},u_L) \in E$ and $\ell(u_L) = \beta$, if such $u_L$ exists; and
    \item set $E_L$ of triplets $(u,u',j)$ such that $(u_{L-1},u),(u,u') \in E$ and \\ $\ell(u)\ell(u')[1..j] = \beta$.
\end{itemize}
Then $([1..y],(i, u_1 \cdot u_2 \cdots u_{L-1} \cdot u \cdot u_{L+1} ,j))$ is a MEM between $Q$ and $G$ for all $(i,u_1) \in U_1$ and $(u,u_{L+1},j) \in E_L$, and also $([1..y],(i, u_1 \cdot u_2 \cdots u_{L-1} \cdot u_L, \lVert u_L \rVert))$ is a MEM for all $(i,u_1) \in U_1$, if $u_L$ exists: these MEMs satisfy conditions \ref{cond:i}, \ref{cond:ii}, and \ref{cond:iii}, and $U_1$, $u_2 \cdots u_{L-1}$, $u_L$, and $E_L$ form a compact representation of all MEMs spelling $Q[1..y]$.

So far the procedure computes all MEMs spanning more than 3 nodes, satisfying $\mathsf{LeftMax}$ and $\mathsf{RightMax}$, and spelling maximal $Q[1..y]$.
We can extend it to find all MEMs satisfying the first two constraints and spelling any substring $Q[x..y]$, with $Q[x..y]$ maximal.
Let $\hat{x}$ be the index for which we have computed MEMs spelling $Q[\hat{x}..y]$ ($\hat{x} = 1$ in the first iteration).
If cases \ref{case:1} or \ref{case:2} hold, we can start to search MEMs spelling $Q[\hat{x}+1..]$ in amortized linear time, since we follow the suffix link of $\overline{p}$.
If case \ref{case:3} holds, we can restart the algorithm looking for MEMs spelling $Q[\hat{x}'..y]$, where $\hat{x}' = \hat{x} + \lvert \alpha \ell(u_2) \cdots \ell(u_{L-2}) \rvert$.
We are not missing any MEM satisfying conditions \ref{cond:i}, \ref{cond:ii}, and \ref{cond:iii}:
due to the semi-repeat-free property, 
any MEM $([x..y],(i',P',j'))$ with $\hat{x} < x < \hat{x}'$ spanning more than $3$ nodes shares substring $\ell(u_k)\ell(u_{k+1})$ with the previously computed MEM, for some $k \in [2..L-3]$, and is such that $\hat{x}' < y$ since we assume \ref{cond:iii} to hold; the algorithm would have matched $Q[\hat{x}..y]$ with case \ref{case:3} in the previous iteration, leading to a contradiction.
The time globally spent reading $Q$ is still $O(\lvert Q \rvert)$, because each character of $Q$ is considered at most twice.

Finally, we are ready to describe how to compute all remaining MEMs $([x..y],(i,P,j))$ between $Q$ and \textefg\ $G$ spanning at least 4 nodes, that is, MEMs such that condition \ref{cond:i} holds and at least one of \ref{cond:ii} and \ref{cond:iii} do not: it is easy to see that $Q[x..y]$ must be contained in the MEMs that we have already computed; also, since they span at least 4 nodes their matches must involve some of nodes $u_2$, \dots, $u_{L-1}$ of MEMs satisfying \ref{cond:i}, \ref{cond:ii}, and \ref{cond:iii}.
Indeed, whenever case \ref{case:3} holds and we decompose $Q[\hat{x}..y]$ as $\alpha \ell(u_2) \cdots \ell(u_{L-1}) \beta$, we can find set $U_{\mathrm{RT}}$ of pairs $(v,j)$, with $v \in V$ and $1 \le j \le \lVert v \rVert$, such that $(v,j) \in U_{\mathrm{RT}}$ iff $(i, P = u \cdot u_2 \cdots u_{b-1} \cdot v, j)$ is a match of $Q[\hat{x}..y']$ in $G$, with $(i,u) \in U_1$, $\lvert P \rvert < L$, $y' < y$, and $Q[y'+1] \notin \rightext(1,u,j)$---verifying $\mathsf{RightMax}$ and describing a MEM where \ref{cond:iii} fails---or $\lvert \rightext(1,u,j) \rvert \ge 2$---verifying the non-singleton condition of \Cref{def:mem} and describing a MEM where \ref{cond:ii} fails.
We can gather all the elements of $U_{\mathrm{RT}}$ during each descending walk in the suffix tree of $T_3'$, since they correspond to the leaves of subtrees of branching nodes in the tentative match of $Q[\hat{x}..y]$.
Analogously, we can find set $U_{\mathrm{LT}}$ of pairs $(i,v)$, with $v \in V$ and $1 \le i \le \lVert v \rVert$, such that $(i,v) \in U_{\mathrm{LT}}$ iff $(i,v) = (1,u_i)$ for $2 \le i \le L - 1$ and $\leftext(u_i) \ge 2$, or $(i, P = v \cdot u_b \cdots u_{L-1}, \lVert u_{L-1} \rVert)$ is a match of $Q[x..y - \lvert \beta \rvert]$ in $G$, with $x > \hat{x}$ and $Q[x-1] \notin \leftext(i,v,\lVert v \rVert)$.
We can compute $U_{\mathrm{LT}}$ by analyzing the leaves of subtrees of branching nodes in the walk in $T_{u_i}$ spelling $\overline{\ell(u_i)}$, with $2 \le i \le L-1$.
Sets $U_1$, $u_2\cdots u_{L-1}$, $u_L$, $E_L$, $U_{\mathrm{LT}}$ and $U_{\mathrm{RT}}$ are a compact representation of all MEMs spanning at least $4$ nodes and involving (any substring of) $Q[\hat{x}..y]$: a cross-product-like algorithm that matches elements of $U_1$ or $L$ with elements of $u_L$, $E_L$, or $U_{\mathrm{RT}}$, joined by the relevant part of $u_2 \cdots u_{L-1}$, can explicitly output the MEMs spanning more than 3 nodes in linear time with respect to the size of the output, by exploiting the fact that $U_{\mathrm{LT}}$ and $U_{\mathrm{RT}}$ are computed and ordered block by block.

\begin{theorem}
Let alphabet $\Sigma$ be of constant size, and let $G=(V,E,\ell)$ be an indexable Elastic Founder Graph of height $H$, that is, the maximum number of nodes in a block of $G$ is $H$.
An encoding of MEMs between query string $Q\in \Sigma^m$ and $G$ with arbitrary length threshold $\kappa$ can be reported in time $O(n H^2 + m +M_\kappa)$, where $n=\sum_{v\in V} \lVert v \rVert$ and $M_\kappa$ is the number of MEMs of interest.
\label{thm:efg-mems}
\end{theorem}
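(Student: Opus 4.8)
The plan is to partition the MEMs to be reported according to the number of nodes $L=\lvert P\rvert$ spanned by their path $P$: those with $L\in\{1,2,3\}$ are produced by three calls to the algorithm of \Cref{lem:mem_m_nodes}, and those with $L\ge 4$ by the suffix-tree machinery over $T_3$ developed above. First I would note that in an indexable \textefg\ the maximum in- or out-degree satisfies $d\le H$, since edges run only between consecutive blocks $V_i,V_{i+1}$, each of size at most $H$. Hence, by the remark following \Cref{lem:mem_m_nodes}, the invocations with $L=1,2,3$ cost $O(m + n + nd + nd^2 + M_{\kappa,1}+M_{\kappa,2}+M_{\kappa,3}) = O(m + nH^2 + M_{\kappa,1}+M_{\kappa,2}+M_{\kappa,3})$, which already accounts for the $nH^2$ term.

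For $L\ge 4$ I would first bound the preprocessing. In $T_3$ a label $\ell(x)$ occurs at most $\mathrm{indeg}(x)\cdot\mathrm{outdeg}(x)+\mathrm{outdeg}(x)\cdot H+\mathrm{indeg}(x)\cdot H = O(H^2)$ times (once as the middle label of a triple $uvw$, and at most $\mathrm{outdeg}(x)\cdot H$, resp.\ $\mathrm{indeg}(x)\cdot H$, times as the first, resp.\ last, label), so $\lvert T_3\rvert = O(nH^2)$ and its suffix tree, together with the node marking, the leaf pointers, and the per-leaf path/offset annotations, is built in $O(nH^2)$ time. The tries $T_v$ have total size $\sum_v\sum_{(u,v)\in E}\lvert\ell(u)\rvert = \sum_u\lvert\ell(u)\rvert\cdot\mathrm{outdeg}(u) \le nH$, hence fit the same bound.

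Next I would run the left-to-right scan of $Q$ through cases~\ref{case:1}--\ref{case:3}: since each position of $Q$ is visited at most twice (on a forward match and after a suffix-link jump), the scan costs $O(m)$. For every maximal match $Q[\hat x..y]$ produced by case~\ref{case:3} I would extract its decomposition $\alpha\,\ell(u_2)\cdots\ell(u_{L-1})\,\beta$ with the unique nodes $u_2,\dots,u_{L-1}$ (from any leaf below $\overline q$), the set $U_1$ (leaves below $\overline p$), the node $u_L$ and set $E_L$ (subtree reached after reading $\beta$), and, accumulated during the descending walks in $T_3$ and in the $T_{u_i}$, the sets $U_{\mathrm{RT}}$ and $U_{\mathrm{LT}}$ of valid right- and left-truncations witnessing $\mathsf{RightMax}$/$\mathsf{LeftMax}$ or the non-singleton conditions of \Cref{def:mem}; finally a cross-product over $U_1\cup U_{\mathrm{LT}}$ against $\{u_L\}\cup E_L\cup U_{\mathrm{RT}}$, glued by the relevant prefix of $u_2\cdots u_{L-1}$, emits every MEM spanning $\ge 4$ nodes. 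Correctness here is what I expect to be the main obstacle: using \Cref{obs:atomicedges} and the semi-repeat-free property I would argue (i) that the middle nodes of any $\ge 4$-node occurrence are unequivocally determined, so the scan recovers exactly the maximal matches $Q[\hat x..y]$ satisfying conditions~\ref{cond:i}--\ref{cond:iii} without overlap between consecutive iterations, and (ii) that every remaining $\ge 4$-node MEM has its $Q$-interval contained in some such $[\hat x..y]$ and must reuse one of $u_2,\dots,u_{L-1}$, hence is enumerated by the cross-product; since $U_{\mathrm{LT}}$ and $U_{\mathrm{RT}}$ are produced block by block, the extractions and the cross-product run in time linear in $\lvert U_1\rvert+\lvert U_{\mathrm{LT}}\rvert+\lvert U_{\mathrm{RT}}\rvert+(\text{output})$, which is $O(M_{\kappa,4+})$ once each collected element is charged to a MEM it participates in.

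Finally I would handle $\kappa>1$. \Cref{lem:mem_m_nodes} already takes $\kappa$ as input, so the $L\le 3$ part is unchanged. For $L\ge 4$, since every sub-MEM of a maximal match $Q[\hat x..y]$ has $Q$-length at most $y-\hat x+1$, I would skip the entire extraction whenever $y-\hat x+1<\kappa$; otherwise the block-by-block ordering of $U_{\mathrm{LT}}$ and $U_{\mathrm{RT}}$ lets me restrict the cross-product to the contiguous ranges yielding MEMs of length at least $\kappa$, keeping it output-linear. Summing the two regimes, and observing that each MEM spans exactly one value of $L\ge 1$ so that $M_\kappa=\sum_{L\ge 1}M_{\kappa,L}$ with no double counting between the $L\le 3$ and $L\ge 4$ routines, gives the overall bound $O(nH^2 + m + M_\kappa)$. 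The hard part, as said, is the exhaustiveness/non-redundancy bookkeeping of the $L\ge 4$ case via \Cref{obs:atomicedges} and semi-repeat-freeness, and ensuring the per-block ordering genuinely makes the final cross-product (and its $\kappa$-truncation) output-linear.
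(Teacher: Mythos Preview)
Your proposal is correct and follows essentially the same approach as the paper: apply \Cref{lem:mem_m_nodes} for $L=1,2,3$, use the suffix-tree machinery over $T_3$ for $L\ge 4$, charge the extraction of $U_1$, $E_L$, $U_{\mathrm{LT}}$, $U_{\mathrm{RT}}$ to $M_{\kappa,4+}$, and observe $\lvert T_3\rvert\in O(nH^2)$. Your write-up is in fact slightly more explicit than the paper's in a few places---you derive the $O(nH^2)$ bound both via the degree remark ($d\le H$) and via a direct count of occurrences of each $\ell(x)$ in $T_3$, and you spell out the $\kappa$-truncation of the cross-product more carefully---but these are elaborations of the same argument, not a different route.
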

\begin{proof}
We can apply the algorithm of \Cref{lem:mem_m_nodes} to find $\kappa$-MEMs spanning $L$ nodes, with $L = 1,2,3$, taking time
$O(\lvert Q \rvert + \lvert T_1 \rvert + \lvert T_2 \rvert + \lvert T_3 \rvert + M_{\kappa,1} + M_{\kappa,2} + M_{\kappa,3})$.

Let $M_{\kappa,4+}$ be the number of MEMs satisfying threshold $\kappa$ and spanning at least $4$ nodes in $G$.
The suffix tree of $T_3'$ can be constructed in time $O(\lvert T_3' \rvert)$ and the described modification of a descending suffix walk on $Q$ takes constant amortized time per step, assuming constant-size alphabet.
The time spent gathering $U_1$, $u_2 \cdots u_{L-1}$, $E_L$, $U_\mathrm{LT}$, and $U_{\mathrm{RT}}$, forming an encoding of the MEMs involving $Q[\hat{x}..y]$, can be charged to $M_{\kappa,4+}$ because each element of $U_1$, $E_L$, $U_\mathrm{LT}$, and $U_{\mathrm{RT}}$ corresponds to one or more MEMs, that could be retrieved in an explicit form with a cross-product-like procedure.
Indeed: for $U_1$ we can retrieve all leaves of the subtree rooted at $\overline{p}$ of the suffix tree of $T_3'$; for $E_L$ and $U_{\mathrm{RT}}$, we can do the same for node $\overline{r}$ reached by the last tentative match of $Q[y+1..]$, and for branching nodes reached during every tentative match; for $U_\mathrm{LT}$, using a compact trie and \emph{blind search}~\cite{DBLP:journals/jacm/FerraginaG99} in the representation of each $T_u$ allows to compare only the branching symbols.
Finally, it is easy to see that in case \ref{case:3}, after we decomposed $\lvert Q[\hat{x}..y] \rvert$ as $\alpha \ell(u_2) \cdots \ell(u_{L-1}) \beta$ as in \Cref{obs:atomicedges}, we know the length of strings $\alpha$, $\ell(u_2)$, \dots, $\ell(u_{L-1})$, and $\beta$, so we can postpone the computation of sets $U_1$, $E_L$, $U_\mathrm{LT}$, and $U_\mathrm{RT}$ and avoid computing MEMs of length smaller than $\kappa$.
Thus, finding an encoding of all MEMs between $Q$ and $G$ with threshold $\kappa$ and spanning more than $3$ nodes takes $O(\lvert Q \rvert + \lvert T_3' \rvert + M_{\kappa,4+})$ time.

The stated time complexity is reached due to the fact that $\lvert T_3 \rvert$ dominates $\lvert T_3' \rvert$, $\lvert T_2 \rvert$, and $\lvert T_1 \rvert$, and for \textefgs\ $\lvert T_3 \rvert \in O(n H^2)$, since every character of every node label $\ell(u)$ gets repeated at most $H^2$ times, which is an upper bound on the number of paths of length $3$ containing $u$.
\end{proof}

\begin{corollary}
The results of Lemmas~\ref{lemma:nodeMEMs}~and~\ref{lem:mem_m_nodes} and Theorem~\ref{thm:efg-mems} hold when query $Q[1..m]$ is replaced by a set of queries of total length $m$. The respective algorithms can be modified to report MEMs between the graph and each query separately.
\label{cor:setofqueries}
\end{corollary}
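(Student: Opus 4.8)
The plan is to reduce the batched problem to the single-query case by concatenation. Fix a fresh separator symbol $\diamond$ with $\diamond \notin \Sigma \cup \{\mathbf{0},\#\}$ that also does not occur in any of the graph texts $T_\text{nodes}$, $T_L$, or $T_3$, and set $Q \coloneqq Q_1 \diamond Q_2 \diamond \cdots \diamond Q_t$. Discarding empty queries we may assume $t \le m$, so $\lvert Q \rvert = m + t - 1 = O(m)$. We then run the algorithm of the relevant statement (\Cref{lemma:nodeMEMs}, \Cref{lem:mem_m_nodes}, or \Cref{thm:efg-mems}) on this $Q$ together with the unchanged graph side, whose construction time and size are therefore unaffected.

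Since $\diamond$ occurs only in $Q$ and never in the graph text, every string that occurs both in $Q$ and in $G$ — in particular every MEM string — avoids $\diamond$ and hence lies entirely inside a single $Q_r$. It remains to check that the MEMs reported for the concatenated $Q$, restricted to the coordinate range of each $Q_r$, are exactly the MEMs between $Q_r$ and $G$. The graph-side conditions of \Cref{def:mem} ($\mathsf{LeftMax}$, $\mathsf{RightMax}$, the non-singleton relaxations on $\leftext$/$\rightext$, and the length threshold $\kappa$) do not depend on the concatenation. The only point to verify is the query-side boundary: a match of $Q_r[x..y]$ that is right-maximal in the single-query sense satisfies $y = \lvert Q_r \rvert$ or $Q_r[y+1] \notin \rightext(\cdot)$; in the concatenation the first case becomes $Q[y'+1] = \diamond \notin \rightext(\cdot)$ (as $\rightext(\cdot) \subseteq \Sigma$) and the second case is the same mismatch, so right-maximality is preserved; conversely, a right-maximal match of the concatenation whose right neighbour is $\diamond$ (or the global end marker) ends at the last position of some $Q_r$. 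The argument for left-maximality is symmetric. Hence the output equals $\bigcup_{r=1}^{t}\{\text{MEMs between } Q_r \text{ and } G\}$.

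To report MEMs per query, we build in $O(\lvert Q \rvert) = O(m)$ time a bitvector $B[1..\lvert Q \rvert]$ marking the first position of each $Q_r$ inside $Q$, together with its $\mathtt{rank}/\mathtt{select}$ structure (see \Cref{sect:preliminaries}); then each reported MEM at concatenated position $x$ is attributed to query $r = \mathtt{rank}(B,x)$ with local coordinate $x - \mathtt{select}(B,r) + 1$ in $O(1)$ time, after which the outputs may be bucketed by $r$ within the same time bound. For \Cref{thm:efg-mems}, the left-to-right descending suffix walk on $Q$ simply fails whenever it meets $\diamond$ (absent from the suffix tree of $T_3$), exactly as at any mismatch or end-of-string event, so the amortized ``each character of $Q$ considered at most twice'' bound is unchanged. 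Consequently the running times $O(n+m+N_\kappa)$, $O(m+\lvert T_L\rvert+M_{\kappa,L})$, and $O(nH^2+m+M_\kappa)$ all carry over verbatim.

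The only non-routine point is the boundary-maximality matching of the second paragraph: one has to confirm that replacing the clause ``$y = \lvert Q_r \rvert$'' by ``the next symbol is $\diamond$'' neither creates spurious MEMs nor suppresses genuine ones, and that this holds simultaneously for the $\mathsf{LeftMax}$/$\mathsf{RightMax}$ conditions and for their non-singleton relaxations in \Cref{def:mem}. Everything else is bookkeeping with $\mathtt{rank}/\mathtt{select}$.
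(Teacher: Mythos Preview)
Your proof is correct and follows essentially the same approach as the paper: concatenate the queries with a fresh separator not occurring in the graph, observe that no MEM can cross a separator, and project the output back to the individual queries. Your version is considerably more detailed than the paper's three-sentence argument---in particular you spell out the boundary-maximality verification and the $\mathtt{rank}/\mathtt{select}$ bookkeeping---but the underlying idea is identical.
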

\begin{proof}
Consider a concatenation $Q=Q^1\$ Q^2\$ \cdots Q^d$ of $d$ query sequences, where $\$$ is a unique symbol not occurring in the queries nor in the graph. 
No MEM can span over such unique separator and hence the MEMs between graph $G$ and concatenation $Q$ are the same as those between $G$ and each $Q^i$. It is thus sufficient to feed concatenation $Q$ as input to the algorithms and project each resulting MEM to the corresponding query sequence.  
\end{proof}

\begin{corollary}
The algorithms of Lemmas~\ref{lemma:nodeMEMs}~and~\ref{lem:mem_m_nodes}, Theorem~\ref{thm:efg-mems}, and Corollary~\ref{cor:setofqueries} can be modified to report only MEMs that occur in text $T$ formed by concatenating the rows (ignoring gaps and adding separator symbols) of the input MSA of the \textefg. This can be done in additional $O(|T|+r\log r)$ time and $O(r\log n)$ bits of space, and with multiplicative factor $O(\log \log n)$ added to the running times of the respective algorithms, where $r$ is the number of equal-letter runs in the BWT of $T$.
\label{cor:adding-r-index}
\end{corollary}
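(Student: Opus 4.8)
The plan is to build the bidirectional r-index of $T$ once and then run it \emph{in lockstep} with each of the four MEM-finding procedures, using it to discard, as they are discovered, all MEMs whose MEM string does not occur in $T$. First I would compute the run-length BWT of $T$ in $O(\lvert T \rvert)$ time and, from it, build the bidirectional r-index of~\cite{ANS22} in $O(r\log r)$ additional time and $O(r\log n)$ bits; recall that this index keeps an $O(1)$-word state encoding the (forward and co-lexicographic) intervals of a current substring $S$ of $T$, and supports, each in $O(\log\log n)$ time, deciding whether $S$ occurs in $T$ (the interval is nonempty), extending $S$ by one symbol on either end, and contracting $S$ by one symbol on either end. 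This already accounts for the one-off $O(\lvert T \rvert + r\log r)$ time and the $O(r\log n)$ bits of space, so the remaining task is to push everything else through with only an $O(\log\log n)$ multiplicative slowdown.

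The engine is the monotonicity of occurrence: if a string $S$ does not occur in $T$, neither does any superstring of $S$. Each of our algorithms explores candidate MEM strings by maintaining a substring of $Q$ that changes only by unit-length moves --- in the procedures behind \Cref{lemma:nodeMEMs}, \Cref{lem:mem_m_nodes}, and \Cref{cor:setofqueries} the candidate $Q'$ only grows, by left extensions $Q' \mapsto aQ'$ along the (implicit) Weiner links, while in the procedure behind \Cref{thm:efg-mems} the active window $Q[\hat x..y]$ grows by right extensions $y \mapsto y+1$ and shrinks by left contractions $\hat x \mapsto \hat x + 1$ (the suffix-link walks of cases~\ref{case:1}--\ref{case:3}, the case-\ref{case:3} jump being a batch of such contractions). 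I would mirror every such move with the matching bidirectional-r-index operation on the same substring of $T$, carrying the corresponding $T$-interval along (on the recursion stack, in the Weiner-link case), so that at every step I know whether the current substring occurs in $T$. Whenever it does not, I prune: in the Weiner-link search the whole subtree rooted at that string is skipped, since every MEM string below it is a superstring; in the EFG scan I record the largest $y^\ast \le y$ with $Q[\hat x..y^\ast]$ occurring in $T$, retain only MEMs whose string ends at or before $y^\ast$, and let the next left contraction re-synchronise the interval. Each exploration step thus triggers $O(1)$ r-index operations, i.e.\ the advertised $O(\log\log n)$ multiplicative overhead.

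For the emission of MEMs, the procedures of \Cref{lemma:nodeMEMs}, \Cref{lem:mem_m_nodes}, and \Cref{cor:setofqueries} are immediate: at the moment the cross-product routine is about to output the MEMs with MEM string $Q'$, the r-index already holds the interval of $Q'$ in $T$, so I run the (unchanged) cross-product if and only if that interval is nonempty, and the output becomes exactly the sub-collection of the original MEMs whose string occurs in $T$. The procedure of \Cref{thm:efg-mems} is the delicate one, because its per-round output is a compact encoding that groups MEMs by a contiguous core $\ell(u_i)\cdots\ell(u_j)$ plus compatible left and right boundary pieces; the candidate MEM strings of one round form a family closed under shortening either boundary, so I would enumerate the ones occurring in $T$ by a pruned traversal of that family, starting from the longest substring $Q[\hat x..y^\ast]$ of the window that occurs in $T$ and moving between neighbouring candidates with r-index contractions and extensions, never descending below an empty interval. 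The surviving candidates correspond to nonempty sets of genuine output MEMs, and the number of r-index operations spent here is bounded by the output size plus an $O(\lvert Q[\hat x..y] \rvert)$ term per round for re-positioning; summing the latter over all rounds gives $O(m)$ by the same accounting that bounds the original algorithm (each character of $Q$ considered at most twice).

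I expect the main obstacle to be precisely this EFG reporting step: one must have a left-contraction primitive (which~\cite{ANS22} provides) and, more importantly, one must make the charging rigorous --- that traversing the shortening-closed family of candidates of a round and re-positioning the interval between rounds together cost $O(m + M_\kappa)$ r-index operations. Granting this, the EFG scan runs in $O(m\log\log n)$ extra time, the reporting adds $O(M_\kappa\log\log n)$, and together with the $O(\lvert T \rvert + r\log r)$ preprocessing the claimed bounds follow for all of \Cref{lemma:nodeMEMs}, \Cref{lem:mem_m_nodes}, and \Cref{thm:efg-mems}. The batched-query form of \Cref{cor:setofqueries} needs nothing new: the separators between queries do not occur in $T$, and the filter is applied to the MEMs of each query independently.
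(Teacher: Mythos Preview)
Your lockstep scheme is a plausible alternative, but the paper's proof is structurally different and considerably lighter. It rests on three shortcuts that avoid the in-lockstep machinery altogether. First, \Cref{lemma:nodeMEMs} needs \emph{no} filtering: every node label of an \textefg\ built from an MSA is by construction a substring of $T$, and the same holds for edge labels, so $L=1,2$ come for free. Second, for \Cref{lem:mem_m_nodes} the filter is pushed into the \emph{construction of $T_L$}: before appending the string for a path $u_1\cdots u_L$ one backward-searches $\ell(u_1)\cdots\ell(u_L)$ in the $r$-index of $T$ and simply drops the path if absent; the MEM algorithm then runs unchanged on the reduced $T_L$. Third, the long MEMs of \Cref{thm:efg-mems} are verified \emph{after the fact} by a fresh backward search of each reported MEM, and the total verification length is bounded because such MEMs mutually overlap by at most one full node label, so the cost is chargeable to the size of the \textefg. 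In particular, the paper only ever needs the counting $r$-index of~\cite{GNP20} with its $O(\log\log n)$ backward step---no bidirectional structure and, crucially, no left-contraction primitive---whereas your scheme hinges on contraction precisely in the EFG case you yourself flag as the obstacle.

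There is also a semantic gap. Your filter asks whether the MEM \emph{string} $Q'$ occurs anywhere in $T$; the paper's path-level filter asks whether the specific path label $\ell(u_1)\cdots\ell(u_L)$ occurs in $T$. For $L\ge 3$ these do not coincide: a recombinant $3$-path $u_1u_2u_3$ may fail the paper's test while a proper substring of $\ell(u_1)\ell(u_2)\ell(u_3)$ (the actual MEM string) still appears in $T$ via a different row. Under the intended reading of ``MEM occurs in $T$''---the graph occurrence $(i,P,j)$ is supported by some MSA row---your procedure would over-report. Your Weiner-link lockstep for \Cref{lemma:nodeMEMs} and \Cref{lem:mem_m_nodes} is sound for the weaker string-occurrence reading, but to match the corollary as stated you would need to test path labels, which is exactly what the paper's construction-time filter does with only backward search.
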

\begin{proof}
Lemma~\ref{lemma:nodeMEMs} does not need any modification as the node labels are automatically substrings of $T$. Same applies for the edge labels, but for longer paths we need to make sure we do not create combinations not supported by $T$. This can be accomplished with the help of the $r$-index: with the claimed time and space one can build the run-length encoded BWT of $T$ \cite{NKT22} and the associated data structures to form the counting version of the $r$-index that supports backward step in $O(\log \log n)$ time \cite{GNP20}. As we concatenate paths consisting of $L$ nodes for MEM finding in Lemma~\ref{lem:mem_m_nodes}, we can first search them using the $r$-index, and only include them if they occur in $T$. MEMs spanning more than $3$ nodes in Theorem~\ref{thm:efg-mems} and Corollary~\ref{cor:setofqueries} can be searched afterwards with the $r$-index to filter out those MEMs not occurring in $T$; these MEMs cannot mutually overlap each other in $Q$ by more than one full node label, so the running time of the verification can be charged on the size of the elastic founder graph.
\end{proof}

\section{Experiments}\label{sect:experiments}

The benefit of Corollary~\ref{cor:adding-r-index} over the mere use of $r$-index for MEM finding \cite{Rosetal22} is that a MEM can occur many times in a repetitive collection while the occurrences starting at the same column of a MSA of the collection can be represented by a small number of paths in the indexable Elastic Founder Graph. 

To test this hypothesis, we implemented the MEM finding algorithm using the bidirectional r-index~\cite{ANS22}. The code is available at \url{https://github.com/algbio/br-index-mems}. The implementation works both for MEM finding between two sequences and between a sequence and a graph. In the case of a graph, the implementation covers the algorithms of Sections~\ref{sect:nodeMEMs} and \ref{sect:3nodeMEMs} up to paths of length $3$ nodes. 
If the input graph is an \textefg\ and $\kappa$ is at most the length of the shortest string spelled by an edge, the implementation outputs all $\kappa$-MEMs, yet some longer MEMs are not fully extended and/or may be reported in several pieces.\footnote{In an earlier version of this paper, we mistakenly set $\kappa$ as the minimum length string spelled by paths spanning 3 nodes, but we later observed that this may lead the implementation to miss some $\kappa$-MEMs.} With these considerations, the implementation gives an upper bound on the number of $\kappa$-MEMs in the case of graphs, but it provides the exact number of MEMs in the case of two sequences as input.

We performed experiments with the same multiple sequence alignment (MSA) of covid19 strains as in~\cite{MCENT20}. We first filtered out strains whose alignments had a run of gaps of length more than 100 bases.\footnote{We used this filter to address a current limitation of the MSA segmentation algorithms for \textefg, where a long gap at the beginning or end of an MSA row hurts the effectiveness of the method.} Then we extracted a sub-MSA of 100 random strains from the remaining and extracted MSAs of the first 20, 40, 60, and 80 strains from this MSA of 100 strains. For each such dataset, we built the bidirectional r-index of the sequences (without gaps) and the \textefg\ of the MSA. The latter were constructed using the tool \url{https://github.com/algbio/founderblockgraphs} with parameters \verb+--elastic --gfa+.

We used $\kappa=12$ in all experiments, as this was the length of the shortest string spelled by an edge over all \textefgs. For the queries, we extracted 1000 substrings of length 100 from the first 20 strains. For each query, we selected two random positions and mutated them with equal probability for A, C, G, or T. The queries were then concatenated into a long sequence and the bidirectional r-index was built on it as described by \Cref{cor:setofqueries}. The MEMs were computed between the queries and the respective text/graph index.

The number of MEMs for each index is reported in Figure~\ref{fig:numberofMEMs} and the number of runs in the two Burrows--Wheeler transforms of each index is reported in Figure~\ref{fig:numberofruns}.
As can be seen from the results, the number of MEMs is greatly reduced when indexing the graph compared to indexing the collection of strains. In this setting, almost all graph MEMs have a counterpart in the collection: we implemented also a filtering mode analogous to Corollary~\ref{cor:adding-r-index} so that only graph MEMs occurring in the original strains are reported. The number of path MEMs (and thus total number of MEMs) reported dropped by 92, 97, 90, 91, and 91 for the case of \textefg\ on 20, 40, 60, 80, and 100 strains, respectively. That is, less than 2\% of the reported graph MEMs were recombinants of the input strains. However, since the implementation outputs long MEMs in pieces, this percentage may be higher for full-length graphs MEMs.
\begin{figure}[p]
\centering
\begin{tikzpicture}
    \begin{axis}[              xlabel={Number of strains},        ylabel={Number of MEMs},        xmin=20, xmax=100,              ymin=10, ymax=1000000,              legend pos=outer north east,              xmajorgrids=true,              ymajorgrids=true,              grid style=dashed,            ymode=log, ]
    \addplot[color=blue, mark=square] coordinates {
      (20, 47556)
      (40, 75877)
      (60, 113727)
      (80, 151672)
      (100, 189522)
    };
    \addlegendentry{sequences}
    \addplot[color=red, mark=*] coordinates {
      (20, 230)
      (40, 59)
      (60, 139)
      (80, 187)
      (100, 319)
    };
    \addlegendentry{efg-nodes}
    \addplot[color=green, mark=x] coordinates {
      (20, 460)
      (40, 43)
      (60, 64)
      (80, 78)
      (100, 120)
    };
    \addlegendentry{efg-edges}
    \addplot[color=black, mark=triangle] coordinates {
      (20, 4451)
      (40, 6001)
      (60, 6843)
      (80, 7654)
      (100, 9245)
    };
    \addlegendentry{efg-paths}
    \addplot[color=purple, mark=triangle*] coordinates {
      (20, 5144)
      (40, 6106)
      (60, 7049)
      (80, 7922)
      (100, 9687)
    };
    \addlegendentry{efg-total}
    \end{axis}
  \end{tikzpicture}
  \caption{Number of MEMs with different indexes and varying number of covid19 strains. Here sequences refers to bidirectional r-index. For \textefg\ the results are shown for nodes, edges, and paths of length 3 nodes (these also include longer MEMs counted multiple times). Line efg-total is the total number of EFG MEMs. Note the logarithmic scale on the $y$-axis. \label{fig:numberofMEMs}}
\end{figure}
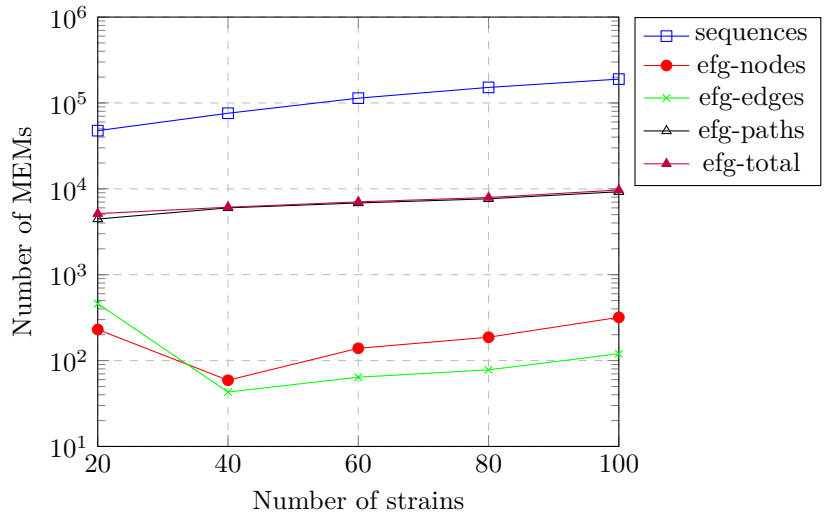  
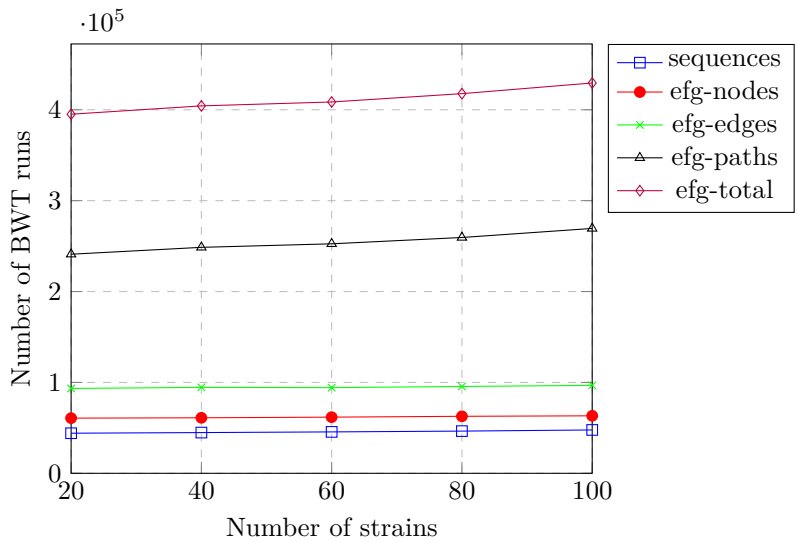
\begin{figure}[p]
  \centering
  \begin{tikzpicture}
    \begin{axis}[      xlabel={Number of strains},      ylabel={Number of BWT runs},      xmin=20, xmax=100,      ymin=0,      legend pos=outer north east,      xmajorgrids=true,      ymajorgrids=true,      grid style=dashed,    ]
    \addplot[color=blue, mark=square] coordinates {
      (20, 44113)
      (40, 44818)
      (60, 45560)
      (80, 46434)
      (100, 47704)
    };
    \addlegendentry{sequences}
   \addplot[color=red, mark=*] coordinates {
       (20, 60696)
       (40, 61090)
       (60, 61796)
       (80, 62722)
       (100, 63260)
    }; 
    \addlegendentry{efg-nodes}
    \addplot[color=green, mark=x] coordinates {
       (20, 93358)
       (40, 94558)
       (60, 94260)
       (80, 95539)
       (100, 96787)
   };
   \addlegendentry{efg-edges}
   \addplot[color=black, mark=triangle] coordinates {
       (20, 241071)
       (40, 248637)
       (60, 252515)
       (80, 259481)
       (100, 269438)
   };
   \addlegendentry{efg-paths}
   \addplot[color=purple, mark=diamond] coordinates {
       (20, 395125)
       (40, 404285)
       (60, 408571)
       (80, 417742)
       (100, 429485)
   };
   \addlegendentry{efg-total}
   \end{axis}
  \end{tikzpicture}
  \caption{Number of BWT runs with different indexes and varying number of covid19 strains. Here sequences refers to the bidirectional r-index, labels efg-nodes, efg-edges, efg-paths refer to the concatenation of node labels, paths of length 2 and paths of length 3, respectively. Label efg-total is the sum of the previous three numbers of runs. \label{fig:numberofruns}}
\end{figure}

The number of runs (the major factor affecting the space used by the indexes) for the bidirectional r-index of the collection and that of the concatenation of node labels are comparable. For edges and especially for paths of length 3 nodes the number of runs is significantly higher. Fortunately, the growth of these metrics when more strains are added is limited. This is not surprising, as the stains are highly similar and thus the added information content is limited and known to be correlated with the number of BWT runs~\cite{makinen2010storage}. 

Running times correlate with the index size comparison: with 100 strains, MEM finding using the the bidirectional r-index took 20 seconds, node MEM finding on the \textefg~took 31 seconds, edge MEM finding 97 seconds, and path MEM finding 217 seconds. The running times were measured on a server with Intel Xeon 2.9 Ghz processor and exclude index construction, which took 1 second for the patterns, 11 seconds for the bidirectional r-index of 100 strains, and 13 seconds for the additional indexes required by the corresponding \textefg. The construction of the \textefg~on 100 strains took 10 seconds. To speed-up MEM finding, we also tested switching the bidirectional r-index to a wavelet tree implementation of bidirectional BWT (\url{https://https://github.com/jnalanko/BD_BWT_index}). On the same 100 strains \textefg, the total time for MEM finding was 96 seconds, including index construction, which means 3.7 speed-up over the bidirectional r-index-based implementation.

\section{Discussion}\label{sect:discussion}

An alternative strategy to achieve the same goal as in our experiments is to encode the graph as an aggregate over the collection, apply MEM finding on the r-index, and report the distinct aggregate values on lexicographic MEM ranges to identify MEM locations in the graph \cite{GBPG22}. 
This approach is not comparable to ours directly, as the compressibility of the aggregates depends on the graph properties, and the indexable Elastic Founder Graph's size has not been analyzed with respect to $r$. Also, the two approaches use different MEM definitions. Our \Cref{def:mem} is symmetric and local, while the version used in earlier work with the $r$-index~\cite{Rosetal22,GBPG22} is asymmetric and semi-global: they define a MEM as a substring of a query that occurs in the text, but its query extensions do not appear in the text. For the purpose of chaining, only the symmetric definition yields connections to the Longest Common Subsequence problem~\cite{RCM23}.
For completeness, our implementation (\url{https://github.com/algbio/br-index-mems}) also supports this asymmetric MEM definition; our algorithms can be simplified for this case. 

We did not implement the general suffix tree-based approach to handle arbitrary long MEMs. From the experiments, we can see that already the case of paths of length 3 nodes handled by the generic MEM finding routine causes some scalability issues, and the suffix tree-based approach uses even more space. In our recent work \cite{RENM23}, we have solved pattern search in \textefgs\ using only edges, and our aim is to extend that approach to work with MEMs, so that the whole mechanism could work on top of a plain bidirectional r-index. Some of our results assume a constant-size alphabet. This assumption can be relaxed with additional data structures, but also a more careful amortized analysis may lead to better bounds.

\bibliography{biblio}

\begin{thebibliography}{10}

\bibitem{DBLP:conf/spire/Abouelhoda07}
Mohamed~Ibrahim Abouelhoda.
\newblock A chaining algorithm for mapping cdna sequences to multiple genomic
  sequences.
\newblock In Nivio Ziviani and Ricardo~A. Baeza{-}Yates, editors, {\em String
  Processing and Information Retrieval, 14th International Symposium, {SPIRE}
  2007, Santiago, Chile, October 29-31, 2007, Proceedings}, volume 4726 of {\em
  Lecture Notes in Computer Science}, pages 1--13. Springer, 2007.
\newblock \href {https://doi.org/10.1007/978-3-540-75530-2\_1}
  {\path{doi:10.1007/978-3-540-75530-2\_1}}.

\bibitem{ANS22}
Yuma Arakawa, Gonzalo Navarro, and Kunihiko Sadakane.
\newblock Bi-directional r-indexes.
\newblock In Hideo Bannai and Jan Holub, editors, {\em 33rd Annual Symposium on
  Combinatorial Pattern Matching, {CPM} 2022, June 27-29, 2022, Prague, Czech
  Republic}, volume 223 of {\em LIPIcs}, pages 11:1--11:14. Schloss Dagstuhl -
  Leibniz-Zentrum f{\"{u}}r Informatik, 2022.
\newblock \href {https://doi.org/10.4230/LIPIcs.CPM.2022.11}
  {\path{doi:10.4230/LIPIcs.CPM.2022.11}}.

\bibitem{DBLP:journals/siamcomp/BackursI18}
Arturs Backurs and Piotr Indyk.
\newblock Edit distance cannot be computed in strongly subquadratic time
  (unless {SETH} is false).
\newblock {\em {SIAM} J. Comput.}, 47(3):1087--1097, 2018.
\newblock \href {https://doi.org/10.1137/15M1053128}
  {\path{doi:10.1137/15M1053128}}.

\bibitem{BCKM20}
Djamal Belazzougui, Fabio Cunial, Juha K{\"{a}}rkk{\"{a}}inen, and Veli
  M{\"{a}}kinen.
\newblock Linear-time string indexing and analysis in small space.
\newblock {\em {ACM} Trans. Algorithms}, 16(2):17:1--17:54, 2020.
\newblock \href {https://doi.org/10.1145/3381417} {\path{doi:10.1145/3381417}}.

\bibitem{bringmann2015quadratic}
Karl Bringmann and Marvin K{\"{u}}nnemann.
\newblock Quadratic conditional lower bounds for string problems and dynamic
  time warping.
\newblock In Venkatesan Guruswami, editor, {\em {IEEE} 56th Annual Symposium on
  Foundations of Computer Science, {FOCS} 2015, Berkeley, CA, USA, 17-20
  October, 2015}, pages 79--97. {IEEE} Computer Society, 2015.
\newblock \href {https://doi.org/10.1109/FOCS.2015.15}
  {\path{doi:10.1109/FOCS.2015.15}}.

\bibitem{BW94}
M.~Burrows and D.~Wheeler.
\newblock A block-sorting lossless data compression algorithm.
\newblock Technical Report 124, Digital Equipment Corporation, 1994.

\bibitem{caceres2022parameterized}
Manuel Caceres.
\newblock Parameterized algorithms for string matching to {DAG}s: Funnels and
  beyond.
\newblock {\em arXiv preprint arXiv:2212.07870}, 2022.
\newblock To appear in the proceedings of CPM 2023.

\bibitem{GC23}
Ghanshyam Chandra and Chirag Jain.
\newblock Sequence to graph alignment using gap-sensitive co-linear chaining.
\newblock In Haixu Tang, editor, {\em Research in Computational Molecular
  Biology}, pages 58--73, Cham, 2023. Springer Nature Switzerland.

\bibitem{clark1997compact}
David Clark.
\newblock {\em Compact pat trees}.
\newblock PhD thesis, 1997.

\bibitem{Maretal16}
The Computational Pan-Genomics Consortium.
\newblock {Computational pan-genomics: status, promises and challenges}.
\newblock {\em Briefings in Bioinformatics}, 19(1):118--135, 10 2016.
\newblock \href
  {http://arxiv.org/abs/https://academic.oup.com/bib/article-pdf/19/1/118/25406834/bbw089.pdf}
  {\path{arXiv:https://academic.oup.com/bib/article-pdf/19/1/118/25406834/bbw089.pdf}},
  \href {https://doi.org/10.1093/bib/bbw089} {\path{doi:10.1093/bib/bbw089}}.

\bibitem{DBLP:conf/dcc/Cotumaccio22}
Nicola Cotumaccio.
\newblock Graphs can be succinctly indexed for pattern matching in ${O}(\vert
  {E} \vert ^ 2 + \vert {V} \vert ^{5/2})$ time.
\newblock In Ali Bilgin, Michael~W. Marcellin, Joan Serra{-}Sagrist{\`{a}}, and
  James~A. Storer, editors, {\em Data Compression Conference, {DCC} 2022,
  Snowbird, UT, USA, March 22-25, 2022}, pages 272--281. {IEEE}, 2022.
\newblock \href {https://doi.org/10.1109/DCC52660.2022.00035}
  {\path{doi:10.1109/DCC52660.2022.00035}}.

\bibitem{cotumaccio2021indexing}
Nicola Cotumaccio and Nicola Prezza.
\newblock On indexing and compressing finite automata.
\newblock In D{\'{a}}niel Marx, editor, {\em Proceedings of the 2021 {ACM-SIAM}
  Symposium on Discrete Algorithms, {SODA} 2021, Virtual Conference, January 10
  - 13, 2021}, pages 2585--2599. {SIAM}, 2021.
\newblock \href {https://doi.org/10.1137/1.9781611976465.153}
  {\path{doi:10.1137/1.9781611976465.153}}.

\bibitem{Bri59}
Rene De~La~Briandais.
\newblock File searching using variable length keys.
\newblock In {\em Papers Presented at the the March 3-5, 1959, Western Joint
  Computer Conference}, IRE-AIEE-ACM ’59 (Western), page 295–298, New York,
  NY, USA, 1959. Association for Computing Machinery.
\newblock \href {https://doi.org/10.1145/1457838.1457895}
  {\path{doi:10.1145/1457838.1457895}}.

\bibitem{EMT21}
Massimo Equi, Veli M{\"{a}}kinen, and Alexandru~I. Tomescu.
\newblock Graphs cannot be indexed in polynomial time for sub-quadratic time
  string matching, unless {SETH} fails.
\newblock In Tom{\'{a}}s Bures, Riccardo Dondi, Johann Gamper, Giovanna
  Guerrini, Tomasz Jurdzinski, Claus Pahl, Florian Sikora, and Prudence W.~H.
  Wong, editors, {\em {SOFSEM} 2021: Theory and Practice of Computer Science -
  47th International Conference on Current Trends in Theory and Practice of
  Computer Science, {SOFSEM} 2021, Bolzano-Bozen, Italy, January 25-29, 2021,
  Proceedings}, volume 12607 of {\em Lecture Notes in Computer Science}, pages
  608--622. Springer, 2021.
\newblock \href {https://doi.org/10.1007/978-3-030-67731-2\_44}
  {\path{doi:10.1007/978-3-030-67731-2\_44}}.

\bibitem{equi2023complexity}
Massimo Equi, Veli M{\"a}kinen, Alexandru~I Tomescu, and Roberto Grossi.
\newblock On the complexity of string matching for graphs.
\newblock {\em ACM Transactions on Algorithms}, 19(3):1--25, 2023.

\bibitem{Equietal22}
Massimo Equi, Tuukka Norri, Jarno Alanko, Bastien Cazaux, Alexandru~I. Tomescu,
  and Veli M{\"{a}}kinen.
\newblock Algorithms and complexity on indexing founder graphs.
\newblock {\em Algorithmica}, 85(6):1586--1623, 2023.
\newblock \href {https://doi.org/10.1007/s00453-022-01007-w}
  {\path{doi:10.1007/s00453-022-01007-w}}.

\bibitem{Farach97}
Martin Farach.
\newblock Optimal suffix tree construction with large alphabets.
\newblock In {\em Proceedings 38th Annual Symposium on Foundations of Computer
  Science}, pages 137--143. IEEE, 1997.

\bibitem{DBLP:journals/jacm/FerraginaG99}
Paolo Ferragina and Roberto Grossi.
\newblock {The String B-tree}: {A} new data structure for string search in
  external memory and its applications.
\newblock {\em J. {ACM}}, 46(2):236--280, 1999.
\newblock \href {https://doi.org/10.1145/301970.301973}
  {\path{doi:10.1145/301970.301973}}.

\bibitem{FH11}
Johannes Fischer and Volker Heun.
\newblock Space-efficient preprocessing schemes for range minimum queries on
  static arrays.
\newblock {\em {SIAM} J. Comput.}, 40(2):465--492, 2011.
\newblock \href {https://doi.org/10.1137/090779759}
  {\path{doi:10.1137/090779759}}.

\bibitem{GNP20}
Travis Gagie, Gonzalo Navarro, and Nicola Prezza.
\newblock Fully functional suffix trees and optimal text searching in bwt-runs
  bounded space.
\newblock {\em J. {ACM}}, 67(1):2:1--2:54, 2020.
\newblock \href {https://doi.org/10.1145/3375890} {\path{doi:10.1145/3375890}}.

\bibitem{GBPG22}
Adri{\'{a}}n Goga, Andrej Bal{\'{a}}z, Alessia Petescia, and Travis Gagie.
\newblock {MARIA:} {M}ultiple-alignment r-index with aggregation.
\newblock {\em CoRR}, abs/2209.09218, 2022.
\newblock \href {http://arxiv.org/abs/2209.09218} {\path{arXiv:2209.09218}},
  \href {https://doi.org/10.48550/arXiv.2209.09218}
  {\path{doi:10.48550/arXiv.2209.09218}}.

\bibitem{DBLP:books/cu/Gusfield1997}
Dan Gusfield.
\newblock {\em Algorithms on Strings, Trees, and Sequences - Computer Science
  and Computational Biology}.
\newblock Cambridge University Press, 1997.
\newblock \href {https://doi.org/10.1017/cbo9780511574931}
  {\path{doi:10.1017/cbo9780511574931}}.

\bibitem{DBLP:conf/focs/Jacobson89}
Guy Jacobson.
\newblock Space-efficient static trees and graphs.
\newblock In {\em 30th Annual Symposium on Foundations of Computer Science,
  Research Triangle Park, North Carolina, USA, 30 October - 1 November 1989},
  pages 549--554. {IEEE} Computer Society, 1989.
\newblock \href {https://doi.org/10.1109/SFCS.1989.63533}
  {\path{doi:10.1109/SFCS.1989.63533}}.

\bibitem{JGT22}
Chirag Jain, Daniel Gibney, and Sharma~V. Thankachan.
\newblock Co-linear chaining with overlaps and gap costs.
\newblock In Itsik Pe'er, editor, {\em Research in Computational Molecular
  Biology - 26th Annual International Conference, {RECOMB} 2022, San Diego, CA,
  USA, May 22-25, 2022, Proceedings}, volume 13278 of {\em Lecture Notes in
  Computer Science}, pages 246--262. Springer, 2022.
\newblock \href {https://doi.org/10.1007/978-3-031-04749-7\_15}
  {\path{doi:10.1007/978-3-031-04749-7\_15}}.

\bibitem{li2013aligning}
Heng Li.
\newblock Aligning sequence reads, clone sequences and assembly contigs with
  {BWA-MEM}.
\newblock {\em arXiv preprint arXiv:1303.3997}, 2013.

\bibitem{li2020design}
Heng Li, Xiaowen Feng, and Chong Chu.
\newblock The design and construction of reference pangenome graphs with
  minigraph.
\newblock {\em Genome Biology}, 21:1--19, 2020.

\bibitem{Ma2022.01.07.475257}
Jun Ma, Manuel C{\'a}ceres, Leena Salmela, Veli M{\"a}kinen, and Alexandru~I.
  Tomescu.
\newblock Chaining for accurate alignment of erroneous long reads to acyclic
  variation graphs.
\newblock {\em bioRxiv}, 2022.
\newblock URL:
  \url{https://www.biorxiv.org/content/early/2022/05/19/2022.01.07.475257},
  \href
  {http://arxiv.org/abs/https://www.biorxiv.org/content/early/2022/05/19/2022.01.07.475257.full.pdf}
  {\path{arXiv:https://www.biorxiv.org/content/early/2022/05/19/2022.01.07.475257.full.pdf}},
  \href {https://doi.org/10.1101/2022.01.07.475257}
  {\path{doi:10.1101/2022.01.07.475257}}.

\bibitem{MBCT15}
Veli M{\"{a}}kinen, Djamal Belazzougui, Fabio Cunial, and Alexandru~I. Tomescu.
\newblock {\em Genome-Scale Algorithm Design: Biological Sequence Analysis in
  the Era of High-Throughput Sequencing}.
\newblock Cambridge University Press, 2015.
\newblock \href {https://doi.org/10.1017/CBO9781139940023}
  {\path{doi:10.1017/CBO9781139940023}}.

\bibitem{MCENT20}
Veli M{\"{a}}kinen, Bastien Cazaux, Massimo Equi, Tuukka Norri, and
  Alexandru~I. Tomescu.
\newblock Linear time construction of indexable founder block graphs.
\newblock In Carl Kingsford and Nadia Pisanti, editors, {\em 20th International
  Workshop on Algorithms in Bioinformatics, {WABI} 2020, September 7-9, 2020,
  Pisa, Italy (Virtual Conference)}, volume 172 of {\em LIPIcs}, pages
  7:1--7:18. Schloss Dagstuhl - Leibniz-Zentrum f{\"{u}}r Informatik, 2020.
\newblock \href {https://doi.org/10.4230/LIPIcs.WABI.2020.7}
  {\path{doi:10.4230/LIPIcs.WABI.2020.7}}.

\bibitem{makinen2010storage}
Veli M{\"{a}}kinen, Gonzalo Navarro, Jouni Sir{\'{e}}n, and Niko
  V{\"{a}}lim{\"{a}}ki.
\newblock Storage and retrieval of highly repetitive sequence collections.
\newblock {\em J. Comput. Biol.}, 17(3):281--308, 2010.
\newblock \href {https://doi.org/10.1089/cmb.2009.0169}
  {\path{doi:10.1089/cmb.2009.0169}}.

\bibitem{MS20}
Veli M{\"{a}}kinen and Kristoffer Sahlin.
\newblock Chaining with overlaps revisited.
\newblock In Inge~Li G{\o}rtz and Oren Weimann, editors, {\em 31st Annual
  Symposium on Combinatorial Pattern Matching, {CPM} 2020, June 17-19, 2020,
  Copenhagen, Denmark}, volume 161 of {\em LIPIcs}, pages 25:1--25:12. Schloss
  Dagstuhl - Leibniz-Zentrum f{\"{u}}r Informatik, 2020.
\newblock \href {https://doi.org/10.4230/LIPIcs.CPM.2020.25}
  {\path{doi:10.4230/LIPIcs.CPM.2020.25}}.

\bibitem{marccais2018mummer4}
Guillaume Mar{\c{c}}ais, Arthur~L. Delcher, Adam~M. Phillippy, Rachel Coston,
  Steven~L. Salzberg, and Aleksey~V. Zimin.
\newblock Mummer4: {A} fast and versatile genome alignment system.
\newblock {\em PLoS Comput. Biol.}, 14(1), 2018.
\newblock \href {https://doi.org/10.1371/journal.pcbi.1005944}
  {\path{doi:10.1371/journal.pcbi.1005944}}.

\bibitem{Mut02}
S.~Muthukrishnan.
\newblock Efficient algorithms for document retrieval problems.
\newblock In David Eppstein, editor, {\em Proceedings of the Thirteenth Annual
  {ACM-SIAM} Symposium on Discrete Algorithms, January 6-8, 2002, San
  Francisco, CA, {USA}}, pages 657--666. {ACM/SIAM}, 2002.
\newblock URL: \url{http://dl.acm.org/citation.cfm?id=545381.545469}.

\bibitem{needleman1970general}
Saul~B Needleman and Christian~D Wunsch.
\newblock A general method applicable to the search for similarities in the
  amino acid sequence of two proteins.
\newblock {\em Journal of Molecular Biology}, 48(3):443--453, 1970.

\bibitem{NKT22}
Takaaki Nishimoto, Shunsuke Kanda, and Yasuo Tabei.
\newblock {An Optimal-Time RLBWT Construction in BWT-Runs Bounded Space}.
\newblock In Miko{\l}aj Boja\'{n}czyk, Emanuela Merelli, and David~P. Woodruff,
  editors, {\em 49th International Colloquium on Automata, Languages, and
  Programming (ICALP 2022)}, volume 229 of {\em Leibniz International
  Proceedings in Informatics (LIPIcs)}, pages 99:1--99:20, Dagstuhl, Germany,
  2022. Schloss Dagstuhl -- Leibniz-Zentrum f{\"u}r Informatik.
\newblock URL: \url{https://drops.dagstuhl.de/opus/volltexte/2022/16440}, \href
  {https://doi.org/10.4230/LIPIcs.ICALP.2022.99}
  {\path{doi:10.4230/LIPIcs.ICALP.2022.99}}.

\bibitem{OhlebuschGK10}
Enno Ohlebusch, Simon Gog, and Adrian K{\"{u}}gel.
\newblock Computing matching statistics and maximal exact matches on compressed
  full-text indexes.
\newblock In Edgar Ch{\'{a}}vez and Stefano Lonardi, editors, {\em String
  Processing and Information Retrieval - 17th International Symposium, {SPIRE}
  2010, Los Cabos, Mexico, October 11-13, 2010. Proceedings}, volume 6393 of
  {\em Lecture Notes in Computer Science}, pages 347--358. Springer, 2010.
\newblock \href {https://doi.org/10.1007/978-3-642-16321-0\_36}
  {\path{doi:10.1007/978-3-642-16321-0\_36}}.

\bibitem{rautiainen2020graphaligner}
Mikko Rautiainen and Tobias Marschall.
\newblock Graphaligner: rapid and versatile sequence-to-graph alignment.
\newblock {\em Genome biology}, 21(1):1--28, 2020.

\bibitem{RCM23}
Nicola Rizzo, Manuel C{\'{a}}ceres, and Veli M{\"{a}}kinen.
\newblock Chaining of maximal exact matches in graphs.
\newblock Manuscript in submission.
\newblock URL: \url{https://doi.org/10.48550/arXiv.2302.01748}.

\bibitem{RENM23}
Nicola Rizzo, Massimo Equi, Tuukka Norri, and Veli M{\"{a}}kinen.
\newblock Elastic founder graphs improved and enhanced.
\newblock {\em CoRR}, abs/2303.05336, 2023.
\newblock Submitted manuscript.
\newblock \href {http://arxiv.org/abs/2303.05336} {\path{arXiv:2303.05336}},
  \href {https://doi.org/10.48550/arXiv.2303.05336}
  {\path{doi:10.48550/arXiv.2303.05336}}.

\bibitem{DBLP:conf/cpm/0001M22}
Nicola Rizzo and Veli M{\"{a}}kinen.
\newblock Indexable elastic founder graphs of minimum height.
\newblock In Hideo Bannai and Jan Holub, editors, {\em 33rd Annual Symposium on
  Combinatorial Pattern Matching, {CPM} 2022, June 27-29, 2022, Prague, Czech
  Republic}, volume 223 of {\em LIPIcs}, pages 19:1--19:19. Schloss Dagstuhl -
  Leibniz-Zentrum f{\"{u}}r Informatik, 2022.
\newblock \href {https://doi.org/10.4230/LIPIcs.CPM.2022.19}
  {\path{doi:10.4230/LIPIcs.CPM.2022.19}}.

\bibitem{rizzo2022solving}
Nicola Rizzo, Alexandru~I. Tomescu, and Alberto Policriti.
\newblock Solving string problems on graphs using the labeled direct product.
\newblock {\em Algorithmica}, 84(10):3008--3033, 2022.
\newblock \href {https://doi.org/10.1007/s00453-022-00989-x}
  {\path{doi:10.1007/s00453-022-00989-x}}.

\bibitem{Rosetal22}
Massimiliano Rossi, Marco Oliva, Paola Bonizzoni, Ben Langmead, Travis Gagie,
  and Christina Boucher.
\newblock Finding maximal exact matches using the r-index.
\newblock {\em J. Comput. Biol.}, 29(2):188--194, 2022.
\newblock \href {https://doi.org/10.1089/cmb.2021.0445}
  {\path{doi:10.1089/cmb.2021.0445}}.

\bibitem{DBLP:journals/iandc/SchnattingerOG12}
Thomas Schnattinger, Enno Ohlebusch, and Simon Gog.
\newblock Bidirectional search in a string with wavelet trees and bidirectional
  matching statistics.
\newblock {\em Inf. Comput.}, 213:13--22, 2012.
\newblock \href {https://doi.org/10.1016/j.ic.2011.03.007}
  {\path{doi:10.1016/j.ic.2011.03.007}}.

\bibitem{DBLP:journals/algorithmica/Ukkonen95}
Esko Ukkonen.
\newblock On-line construction of suffix trees.
\newblock {\em Algorithmica}, 14(3):249--260, 1995.
\newblock \href {https://doi.org/10.1007/BF01206331}
  {\path{doi:10.1007/BF01206331}}.

\bibitem{essaMEM}
Micha{\"{e}}l Vyverman, Bernard~De Baets, Veerle Fack, and Peter Dawyndt.
\newblock essamem: finding maximal exact matches using enhanced sparse suffix
  arrays.
\newblock {\em Bioinform.}, 29(6):802--804, 2013.
\newblock \href {https://doi.org/10.1093/bioinformatics/btt042}
  {\path{doi:10.1093/bioinformatics/btt042}}.

\bibitem{wagner1974string}
Robert~A Wagner and Michael~J Fischer.
\newblock The string-to-string correction problem.
\newblock {\em Journal of the ACM}, 21(1):168--173, 1974.

\end{thebibliography}
\end{document}